\newcommand\is{witness}
\newcommand\iw{$i$-witness}
\newcommand\cu{\ensuremath{C_\_}}
\newcommand\even{\mathsf{even}}
\newcommand\val{\mathsf{value}}
\newcommand\up{\mathsf{up}}
\newcommand\ru{\mathsf{ru}}
\newcommand\au{\mathsf{au}}
\newcommand\won{\mathsf{won}}
\newcommand\lift{\mathsf{lift}}
\newcommand\bfb{\mathbf{b}}
\begin{document}
\title{An Ordered Approach to Solving Parity Games in Quasi Polynomial Time and Quasi Linear Space}






\author{John Fearnley\inst{1}, Sanjay Jain\inst{2}, Sven Schewe\inst{1}, Frank
Stephan\inst{2},
Dominik Wojtczak\inst{1}}

\institute{University of Liverpool \and National University of Singapore}

\sloppy



\newcommand \Aa {\mathcal{P}}
\newcommand{\seq}[1]{\langle #1 \rangle}

\newcommand{\Plays}{\mathsf{Plays}}

\maketitle

\begin{abstract}
Parity games play an important role in model checking and synthesis.
In their 
paper, Calude et al.\ have recently shown that these games can be solved in quasi-polynomial time.
We show that their algorithm can be implemented efficiently:
we use their data structure as a progress measure, allowing for a backward implementation instead of a complete unravelling of the game.
To achieve this, a number of changes have to be made to their techniques, where the main one is to add power to the antagonistic player that allows for determining her rational move without changing the outcome of the game.
We provide a first implementation for a quasi-polynomial algorithm, test it on small examples, and provide a number of side results, including minor algorithmic improvements, a quasi bi-linear complexity in the number of states and edges for a fixed number of colours, and matching lower bounds for the algorithm of Calude et al.
\end{abstract}

\section{Introduction}

Parity games are two-player 
zero-sum games played on a finite graph. The
two players, named \emph{even} and \emph{odd}, move a token around the graph
until a cycle is formed. Each vertex is labelled with an integer \emph{colour},
and the winner is determined by the \emph{parity} of the largest colour that
appears on the cycle: player \emph{even} wins if it is an even colour, and player
\emph{odd} wins otherwise.

Parity games have been the focus of intense study
\cite{Emerson+Lei/86/Parity,Emerson+Jutla/91/Memoryless,McNaughton/93/Games,Browne-all/97/fixedpoint,Zielonka/98/Parity,Jurdzinski/00/ParityGames,Ludwig/95/random,Puri/95/simprove,Voge+Jurdzinski/00/simprove,BjorklundVorobyov/07/subexp,Obdrzalek/03/TreeWidth,Lange/05/ParitySAT,Berwanger+all/06/ParityDAG,Jurdzinski/06/subex,Schewe/08/improvement,Fearnley/10/snare,DBLP:conf/icalp/ScheweTV15,DBLP:conf/lics/ChatterjeeHL15,Schewe/17/parity,quasipolynomial,JL17}, in part due to their
practical applications. Solving parity games is the central and most expensive
step in many model
checking~\cite{Kozen/83/mu,Emerson+Lei/86/Parity,Emerson+all/93/mu,Wilke/01/Alternating,deAlfaro+Henziger+Majumdar/01/Control,Alur+Henziger+Kupferman/02/ATL},
satisfiability
checking~\cite{Wilke/01/Alternating,Kozen/83/mu,Vardi/98/2WayAutomata,Schewe+Finkbeiner/06/ATM},
and synthesis~\cite{Piterman/06/Parity,Schewe+Finkbeiner/06/Asynchronous,DBLP:conf/cav/HahnSTZ16}
algorithms.

Parity games have also attracted attention due to their unusual complexity
status. The problem of determining the winner of a parity game is known to lie
in UP~$\cap$~co-UP~\cite{Jurdzinski/98/UP}, so the problem is very unlikely to
be NP-complete. However, despite much effort, no polynomial time algorithm has
been devised for the problem. Determining the exact complexity of solving a
parity game is a major open problem.

Three main classes of algorithms have been developed for solving parity games in
practice. The \emph{recursive}
algorithm~\cite{McNaughton/93/Games,Zielonka/98/Parity}, which despite being one
of the oldest algorithms has been found to be quite competitive in
practice~\cite{FriedmannL09}. \emph{Strategy improvement} algorithms use a local
search technique~\cite{Voge+Jurdzinski/00/simprove}, similar to the simplex
method for linear programming and policy iteration algorithms for solving Markov
decision processes. \emph{Progress measure} algorithms define a measure that
captures the winner of the game, and then use value iteration techniques to find
it~\cite{Jurdzinski/00/ParityGames}. Each of these algorithms has inspired lines
of further research, all of which have contributed to our understanding of
parity games. Unfortunately, all of them are known to have exponential worst
case complexity.

Recently, 
Calude et al.\
\cite{quasipolynomial} have provided a \emph{quasi-polynomial} time algorithm for
solving parity games that runs in time $O(n^{\lceil \log (c) + 6 \rceil})$,
where $c$ denotes the number of priorities used in the game. Previously, the
best known algorithm for parity games was a deterministic sub-exponential
algorithm~\cite{Jurdzinski/06/subex}, which could solve parity games in
$n^{O(\sqrt{n})}$ time, so this new result represents a significant advance in
our understanding of parity games. 

Their approach is to provide a compact witness that can be used to decide
whether player \emph{even} wins a play. Traditionally, one must store the entire
history of a play, so that when the players construct a cycle, we can easily
find the largest priority on that cycle. The key observation of Calude et al.\ \cite{quasipolynomial}
is that a witness of poly-logarithmic size can be used instead. This allows them
to simulate a parity game on an alternating Turing machine that uses
poly-logarithmic space, which leads to a deterministic algorithm that uses
quasi-polynomial time and space.

This new result has already inspired follow-up work. Jurdzi{\'n}ski and
Lazi{\'c}~\cite{JL17} have developed an adaptation of the classical
small-progress measures algorithm~\cite{Jurdzinski/00/ParityGames} that runs in
quasi-polynomial time. Their approach is to provide a succinct encoding of a
small-progress measure, which is very different from the succinct
encoding developed by Calude et al.\ \cite{quasipolynomial}.
The key advantage of using progress measures
as a base for the algorithm is that it avoids the quasi-polynomial space
requirement of the algorithm of Calude et al., instead providing an algorithm
that runs in quasi-polynomial time and near linear space.

\paragraph{Our contribution.}

In this paper, we develop a progress-measure based algorithm for solving parity
games that uses the succinct witnesses of Calude et al.\ \cite{quasipolynomial}.
These witnesses were designed to be used in a \emph{forward} manner, which means that they are
updated as we move along a play of the game.
Our key contribution is to show that these witnesses can also be used in a \emph{backwards} manner, by
processing the play backwards from a certain point. This allows us to formulate
a value iteration algorithm that uses (backwards versions of) the witnesses of
Calude et al.\ \cite{quasipolynomial} directly.

The outcome of this is to provide a second algorithm for parity games that runs
in quasi-polynomial time and near linear space. We provide a comprehensive
complexity analysis of this algorithm, which is more detailed than the one given
Calude et al.\ \cite{quasipolynomial} for the original algorithm.
In particular, we show that our algorithm provides
\begin{enumerate}
 \item a quasi bi-linear running time for a fixed number of colours, $O(m n \log(n)^{c-1})$;
 \item a quasi bi-linear FPT bound, e.g.\ $O(m n \mathfrak{a}(n)^{\log \log n})$, where any other quasi-constant function can be used to replace the inverse Ackermann function $\mathfrak{a}$; and
 \item an improved upper bound for a high number of colours, $O(m \cdot h \cdot n^{c_{1.45}+\log_2(h)})$ 
\end{enumerate}
for parity games with $m$ edges, $n$ vertices, and $c$ colours, where $h = \lceil 1 + c/\log(n) \rceil$ and
the constant $c_{1.45} = \log_2 \mathsf e < 1.45$.
We also provide an argument that parity games with $O(\log n)$ colours can be solved in polynomial time.

The complexity bounds (1) of our algorithm only match the bounds for the
algorithm of Jurdzi{\'n}ski and Lazi{\'c} \cite{JL17}, while (2) and (3) are
new. Moreover, we believe that it is
interesting that the witnesses of Calude et al.\ \cite{quasipolynomial} can be used in this way.
The history of research into parity games has shown that ideas from the varying
algorithms for parity games can often spur on further research. Our result and
the work of Jurdzi{\'n}ski and Lazi{\'c} show that there are two very different
ways of succinctly encoding the information that is needed to decide the winner
in a parity game, and that both of them can be applied in value iteration
algorithms. Moreover, implementing our progress measure is easier, as standard representations of the colours can be used. We have implemented our algorithm, and we provide some experimental results in the last section.

Finally, we present a lower bound for our algorithm, and for the algorithm of
Calude et al.\ \cite{quasipolynomial}. We derive a family of examples upon which both of the algorithms
achieve their worst case---quasi-polynomial---running time.
These are simple single player games.









\section{Preliminaries}
$\mathbb N$ denotes the set of positive natural numbers
$\{1,2,3,\ldots\}$.
Parity games are turn-based zero-sum games played
between two players---even and odd, or maximiser and minimiser---over finite graphs.  
A parity game $\Aa$ is a tuple $(V_e, V_o, E, C, \phi)$, where
$(V = V_e \cup V_o, E)$ is a finite directed graph with the set of
vertices $V$ partitioned into a set $V_e$ of vertices controlled by player \emph{even}
and a set $V_o$ of vertices controlled by player \emph{odd}, 
$E \subseteq V \times V$ is the set of edges,  $C\subseteq \mathbb N$ is a set of colours, and
$\phi: V \to C$ is the colour mapping. 
We require that every vertex has at least one outgoing edge. 

A parity game $\Aa$ is played between the two players, \emph{even} and \emph{odd}, by moving a token along
the edges of the graph. 
A play of such a game starts by placing a token on some initial vertex 
$v_0 \in V$.
The player controlling this vertex then chooses a successor vertex
$v_1$ such that $(v_0, v_1) \in E$ and the token is moved to this successor vertex. 
In the next turn the player controlling the vertex $v_1$ chooses the successor
vertex $v_2$ with $(v_1, v_2) \in E$ and the token is moved accordingly. 
Both players move the token over the arena in this manner and thus form a play
of the game.
Formally, a play of a game $\Aa$ is an infinite sequence of vertices
$\seq{v_0, v_1, \ldots} \in V^\omega$ such that, for all $i \geq 0$, we have that
$(v_i, v_{i+1}) \in E$.  
We write $\Plays_\Aa(v)$ for the set of plays of the game $\Aa$ that start from a vertex
$v \in V$ and $\Plays_\Aa$ for the set of plays of the game. 
We omit the subscript when the arena is clear from the context. 
We extend the colour mapping $\phi: V \to C$ from vertices to plays by defining
the mapping $\phi: \Plays \to C^\omega$ as  
$\seq{v_0, v_1, \ldots} \mapsto \seq{\phi(v_0), \phi(v_1), \ldots}$.

A play $\seq{v_0, v_1, \ldots}$ is won by player \emph{even} if $\limsup_{i \rightarrow \infty}\phi(v_i)$ is even, by player \emph{odd} if $\limsup_{i \rightarrow \infty}\phi(v_i)$ is odd.

A strategy for player \emph{even} is a function $\sigma: V^*V_e \rightarrow V$ such that $\big(v,\sigma(\rho,v)\big)\in E$ for all $\rho \in V^*$ and $v \in V_e$.
A strategy $\sigma$ is called memoryless if $\sigma$ only depends on the last state ($\sigma(\rho,v) = \sigma(\rho',v)$ for all $\rho,\rho' \in V^*$ and $v \in V_e$).
A play $\seq{v_0, v_1, \ldots}$ is consistent with $\sigma$ if, for every initial sequence $\rho_n = v_0, v_1, \ldots,v_n$ of the play that ends in a state of player \emph{even} ($v_n \in V_e$), $\sigma(\rho_n)=v_{n+1}$ holds.

It is well known that the following conditions are equivalent:
Player \emph{even} wins the game starting at $v_0$ if she has a strategy $\sigma$ that satisfies that
\begin{enumerate}
 \item all plays $\seq{v_0, v_1, \ldots}$ consistent with $\sigma$ satisfy $\limsup_{i \rightarrow \infty}\phi(v_i)$ (i.e.\ the highest colour that occurs infinitely often in the play) is even;
 \item all plays $\seq{v_0, v_1, \ldots}$ consistent with $\sigma$ contain a winning loop $v_i, v_{i+1}, \ldots, v_{i+k}$, that satisfies $v_i = v_{i+k}$ and $\phi(v_i) \geq \phi(v_{i+j})$ for all natural numbers $j \leq k$;
 \item as (1), and $\sigma$ must be memoryless; or
 \item as (2), and $\sigma$ must be memoryless.
\end{enumerate}

We use different criteria in the technical part, choosing the one that is most convenient.

\section{QP Algorithms}
We discuss a variation of the algorithm of Calude et al.~\cite{quasipolynomial}.

In a nutshell, the algorithm keeps a data structure, the \is es, that encodes the existence of sequences of ``good'' events.
This intuitively qualifies \is es as a measure of progress in the construction of a winning cycle. 
This intuition does not fully hold, as winning cycles are not normally identified immediately, but it gives a good intuition of the guarantees the data structure provides.

In~\cite{quasipolynomial}, \is es are used to track information in an alternating machine.
As they are quite succinct (they have only logarithmically many entries in the number of vertices of the game, and each entry only requires logarithmic space in the number of colours), this entails the quasi-polynomial complexity.

We have made this data structure accessible for value iteration, using it in a similar way as classical progress measures.
This requires a---simple---argument that \is es can be used in a backward analysis of a run just as well as in a forward analysis.
This, in turn, requires a twist in the updating rule that allows for rational decisions.
For this, we equip the data structure with an order, and show that the same game is still won by the same player if the antagonist can increase the value in every step.

\paragraph{\bf $i$-Witnesses} Let $\rho = v_1, v_2, \dots, v_m$ be a prefix of a play of the
parity game. An \emph{\iw} is a sequence of (not necessarily consecutive) positions
of $\rho$ 
\begin{equation*}
p_1, p_2, p_3, \dots, p_{2^i}, 
\end{equation*}
of length exactly $2^i$, that satisfies the following properties:
\begin{itemize}
\item \textbf{Position:} Each $p_j$ specifies a position in the play $\rho$,
so each $p_j$ is an integer that satisfies $1 \le p_j \le m$.
\item \textbf{Order:} The positions are ordered. So we have $p_j < p_{j+1}$ for
all $j < 2^i$.

\item {\bf Evenness:} 
All positions other than the final one are even. Formally, for all $j < 2^i$ the
colour $\phi(v_{p_{j}})$ of the vertex in position $p_{j}$ is even.

 \item {\bf Inner domination:}  
The colour of every vertex between $p_j$ and $p_{j+1}$ is dominated by the colour
of $p_j$, or the colour of $p_{j+1}$. Formally, 
for all $j < 2^i$, the largest colour of any vertex in the subsequence
$v_{p_{j}},v_{(p_{j})+1},\ldots,v_{p_{(j+1)}}$ is less than or equal to 
  $\max\big\{\phi(v_{p_{j}}),\phi(v_{p_{j+1}})\big\}$.

 \item {\bf Outer domination:} 
The colour of $p_{2^i}$ is greater than or equal to the colour of every vertex
that appears after 
$p_{2^i}$ in $\rho$. Formally, for all $k$ in the range $p_{2^i} < k \le m$, we
have that $\phi(v_{k}) \le \phi(v_{p_{2^i}})$.

\end{itemize}

\paragraph{\bf Witnesses}

We define $\cu = C \cup \{ \_ \}$ to be the set of colours augmented with the $\_$
symbol.
A \emph{witness} is a sequence 
\begin{equation*}
b_k, b_{k-1}, \dots, b_1, b_0,
\end{equation*}
of length $k+1$---we will later see that $k = \lfloor \log_2(e)\rfloor$ is big enough, where $e$ is the number of vertices with an even colour---where each element $b_i \in \cu$, and
that satisfies the following properties.
\begin{itemize}
\item \textbf{Witnessing.} There exists a family of $i$-witnesses, one for each element $b_i$ with $b_i \ne \_$.
We refer to such an $i$-witness in the run $\rho$. We will refer to this witness as
\begin{equation*}
p_{i, 1}, \; p_{i, 2}, \; \dots, \; p_{i, 2^i}.
\end{equation*}
\item \textbf{Dominating colour.}
For each $b_j \ne \_$, we have that $b_j = \phi(v_{p_{i, 2^i}})$. In other words,
$b_j$ is the outer domination colour of the $i$-witness.
\item \textbf{Ordered sequences.} The $i$-witness associated with $b_i$ starts after
$j$-witness associated with $b_{j}$ whenever $i < j$. Formally, for all $i$
and $j$ with $i < j$, if $b_i
\ne \_$ and $b_{j} \ne \_$, then $p_{j, 2^j} < p_{i, 1}$.
\end{itemize}
It should be noted that the $i$-witnesses associated with each position $b_i$ are not
stored in the witness, but in order for a sequence to be a witness, the
corresponding $i$-witnesses must exist.

Observe that the dominating colour property combined with the ordered
sequences property imply that the colours in a witness are monotonically
increasing, since each colour $b_j$ (weakly) dominates all colours that appear
afterwards in~$\rho$.

\paragraph{\bf Forwards and backwards witnesses.} So far, we have described
\emph{forwards} witnesses, which were introduced in~\cite{quasipolynomial}. In
this paper, we introduce the concept of a  
\emph{backwards} witnesses, and an ordering over these witnesses, which will be used in our main result. For each play $\rho = v_1, v_2,
\dots, v_m$, we define the reverse play $\overleftarrow{\rho} = v_m, v_{m-1},
\dots, v_1$. A backwards witness is a witness for $\overleftarrow{\rho}$, or for an initial sequence of it.

\paragraph{\bf Order on witnesses.}
We first introduce an order $\succeq$ over the set $\cu$ that captures the
following requirements: even numbers are better than odd numbers, and all
numbers are better than $\_$. Among the even numbers, higher numbers are better
than smaller ones, while among the odd numbers, smaller numbers are better than
higher numbers. Formally, $b \succeq c$ if either $c = \_\,$; or if $c$ is odd and
$b$ is either odd and $b \leq c$ holds, or $b$ is even; or
$c$ is even and $b$ is even and $b \geq c$ holds.

Then, we define an order $\sqsupseteq$ over witnesses. This order compares two
witnesses lexicographically, starting from $b_k$ and working downwards, and for each individual position the entries are compared using $\succeq$.
We also define a special witness $\won$ which is $\sqsupseteq$ than any other witness.

\paragraph{\bf The value of a witness.}
An \emph{even chain} of length $m$ is a sequence of positions $p_1 < p_2 <p_3 <
\ldots < p_m$ (with $0\leq p_0$ and $p_m \leq n$) in $\rho$ that has the
following properties:
\begin{itemize}
\item for all $j \leq m$, we have that $\phi(v_{p_j})$ is even, and
\item for all $j < m$ the colours in the subsequence defined by $p_{j}$ and
$p_{j+1}$ are less than or equal to  $\phi(p_j)$ or $\phi(p_{j+1})$. More
formally, we have that all colours
$\phi(v_{p_j}),\phi(v_{(p_j)+1}),\ldots,\phi(v_{p_{(j+1)}})$ are less than or
equal to $\max\big\{\phi(v_{p_j}),\phi(v_{p_{j+1}})\big\}$.
\end{itemize}

For each witness $\mathbf b = b_k, b_{k-1}, \ldots, b_0$, we define the function
$\even(\mathbf b,i) = 1$ if $b_i \ne \_$ and $b_i$ is even. Then we define the
value of the witness $\mathbf b$ to be $\val(\mathbf b) = \sum_{i=0}^k 2^i \cdot
\even(\mathbf b,i)$. We can show that the value $\mathbf{b}$ corresponds to the
length of an even chain in $\rho$ that is witnessed by $\mathbf{b}$.

\begin{lemma}
If\ $\mathbf{b}$ is a (forward or backward) witness of $\rho$, then there 
is an even chain of length $\val(\mathbf b)$ in $\rho$.
\end{lemma}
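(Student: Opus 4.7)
My plan is to treat the forward case first and then reduce the backward case to it. If $\mathbf b$ is a backward witness of $\rho$, then by definition it is a forward witness of $\overleftarrow{\rho}$, so the forward case produces an even chain of length $\val(\mathbf b)$ in $\overleftarrow{\rho}$. Reversing the list of chosen positions transfers this to an even chain in $\rho$ of the same length: strict ordering is preserved by reversal, the colours at chosen positions are unchanged, and the set of vertices strictly between two chosen positions (together with their colours) is the same regardless of the direction in which the play is traversed, so the colour-domination conditions survive reversal.

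For the forward case, I would construct the candidate even chain explicitly by concatenating, in $\rho$-order (that is, scanning indices from $k$ down to $0$), the underlying $i$-witnesses for exactly those indices $i$ with $b_i\neq\_$ and $b_i$ even. Each contributing $i$-witness adds $2^i$ positions, yielding a total length of $\sum_{i:\,b_i\text{ even}}2^i=\val(\mathbf b)$. The Ordered sequences property of a witness guarantees that the overall sequence of positions is strictly increasing in~$\rho$.

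It then remains to verify the two defining conditions of an even chain. The evenness of every chosen position follows from the Evenness property of the $i$-witness for all but its last entry, and from the Dominating colour property together with the assumption that $b_i$ is even for the last entry $p_{i,2^i}$, whose colour equals $b_i$. The inner-domination condition between two consecutive positions that lie inside the same $i$-witness is exactly its Inner domination property.

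The only genuinely new step---and the main obstacle---is verifying the domination condition for a bridge between two consecutive contributing witnesses in the chain: the last position $p_{j,2^j}$ of some $j$-witness (with $b_j$ even) and the first position $p_{i,1}$ of the next contributing $i$-witness (with $i<j$ and $b_i$ even). Here I would invoke Outer domination of the $j$-witness: every vertex strictly between $p_{j,2^j}$ and $p_{i,1}$ in $\rho$ lies after $p_{j,2^j}$, and hence has colour at most $\phi(v_{p_{j,2^j}})=b_j$. Since $b_j$ is itself one of the two endpoint colours of the bridge, the maximum of the two endpoints dominates all interior colours, as required by the even-chain definition. This completes the plan.
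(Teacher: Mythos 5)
Your proof is correct and follows essentially the same route as the paper's: concatenate the $i$-witnesses for the even entries $b_i$ in play order and use the outer domination of the earlier (higher-index) witness to dominate each bridge. You are somewhat more explicit than the paper about the backward case (via reversal) and about checking evenness at the final position of each $i$-witness, but the argument is the same.
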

\begin{proof}
Let $i$ be an index such that $\even(\mathbf{b}, i) = 1$. By definition, the
$i$-witness $p_{i, 1}, p_{i, 2}, \dots, p_{i, 2^i}$ is an even chain of length
$2^i$ in $\rho$. This holds irrespective of whether $\mathbf{b}$ is a forward
or backward witness. 

Then, given an index $j > i$ such that $\even(\mathbf{b}, j) =
1$, observe that the outer domination property ensures that $\phi(p_{i, 2^i})
\ge \phi(v_l)$ for all $l$ in the range $p_{i, 2^i} \le l \le p_{j, 1}$. So,
when we concatenate the $i$-witness with the $j$-witness we still obtain an even
chain. Thus, $\rho$ must contain an even chain of length $\val(\mathbf b)$.
\end{proof}

Let $e = |\{v \in V \; : \; \phi(v) \text{ is even }\}|$ be the number of
vertices with even colours in the game. Observe that, if we have an even chain
whose length is strictly greater than $e$, then $\rho$ must contain a cycle,
since there must be a vertex with even colour that has been visited twice.
Moreover, the largest priority on this cycle must be even, so this is a winning
cycle for player \emph{even}. Thus, for player \emph{even} to win the parity game, it is
sufficient for him to force a play that has a witness whose value is strictly
greater than $e$.

\begin{lemma}
\label{lem:correct}
If, from an initial state $v_0$, player \emph{even} can force the game to run
through a sequence $\rho$, such that $\rho$ has a (forwards or backwards)
witness $\mathbf b$ such that $\val(\mathbf b)$ is greater than the number of vertices
with even colour, then player \emph{even} wins the parity game starting at $v_0$.
\end{lemma}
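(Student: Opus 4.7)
The plan is to combine the preceding lemma (which extracts an even chain of length $\val(\mathbf{b})$ from $\rho$) with a pigeonhole argument to exhibit a winning loop inside $\rho$, and then invoke the equivalence of the winning conditions stated in the preliminaries to lift this to a strategy-level statement.

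First I would apply the previous lemma to $\mathbf{b}$ and $\rho$ to obtain an even chain $p_1 < p_2 < \cdots < p_m$ in $\rho$ of length $m = \val(\mathbf{b}) > e$. Since each $\phi(v_{p_j})$ is even and there are only $e$ vertices of even colour, the pigeonhole principle yields indices $j < j'$ with $v_{p_j} = v_{p_{j'}}$. The subsequence $v_{p_j}, v_{p_j + 1}, \ldots, v_{p_{j'}}$ is therefore a loop in the game graph, and by the evenness property of the chain its endpoint colour is even.

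Next I would verify that this loop is a winning loop for player \emph{even}, i.e.\ its largest colour is even. The chain property guarantees that for each $l \in \{j, j+1, \ldots, j'-1\}$, every colour occurring in $v_{p_l}, \ldots, v_{p_{l+1}}$ is at most $\max\{\phi(v_{p_l}), \phi(v_{p_{l+1}})\}$, which is even. Chaining these bounds, the maximum colour appearing on the loop is at most the largest $\phi(v_{p_l})$ with $j \le l \le j'$, which is even. Hence the maximum colour on the loop is itself even, so this loop satisfies the winning-loop condition in characterisation (2) of the preliminaries. The same argument works regardless of whether $\mathbf{b}$ is a forward or backward witness, since an even chain in $\rho$ is also an even chain in $\overleftarrow{\rho}$ read in reverse (the inner-domination bound is symmetric), and a loop in $\overleftarrow{\rho}$ is a loop in $\rho$.

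The main obstacle is the conceptual step from ``player \emph{even} can force the finite prefix $\rho$'' to ``player \emph{even} wins the infinite parity game from $v_0$.'' I would resolve this by unpacking ``force'' through a strategy $\sigma$ such that every $\sigma$-consistent play starting at $v_0$ has a prefix admitting a witness $\mathbf{b}$ of value exceeding $e$. The argument above then shows that every such play contains a winning loop in the sense of characterisation~(2), and the equivalence of the four winning conditions listed in the preliminaries gives that $\sigma$ is a winning strategy for player \emph{even} from $v_0$. Everything else in the proof is a direct, routine unpacking of the definitions.
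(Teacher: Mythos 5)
Your proposal is correct and follows essentially the same route as the paper, which justifies this lemma only by the informal paragraph preceding it: extract an even chain of length $\val(\mathbf b) > e$ via the previous lemma, apply pigeonhole to find a repeated even-coloured vertex, use the chain's domination property to conclude the resulting cycle's maximal colour is even, and invoke the winning-loop characterisation from the preliminaries. Your version is in fact slightly more careful than the paper's (explicitly checking the loop's maximal colour and the forward/backward symmetry), but it is the same argument.
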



\subsection{Updating forward witnesses}


We now show how forward witnesses can be constructed incrementally by processing
the play one vertex at a time. Throughout this subsection, we will suppose that
we have a play $\rho = v_0, v_1, \dots, v_{m}$, and a new vertex $v_{m+1}$ that
we would like to append to $\rho$ to create $\rho'$. We will use $d =
\phi(v_{m+1})$ to denote the colour of this new vertex. We will suppose that
$\mathbf b = b_k, b_{k-1}, \ldots, b_1,b_0$ is a witness for $\rho$, and we will
construct a witness $\mathbf c = c_k, c_{k-1}, \ldots, c_1,c_0$ for $\rho'$.

We present three lemmas that allow us to perform this task.

\begin{lemma}
\label{lem:up.overflow}
Suppose that there exists an index $j$ such that $b_i$ is even for all $i < j$,
and that $b_i \ge d$ or $b_i = \_$ for all $i > j$. If we set $c_i = b_i$ for
all $i > j$, $c_j = d$, and $c_i = \_$ for all $i < j$, then $\mathbf{c}$ is a
witness for $\rho'$.
\end{lemma}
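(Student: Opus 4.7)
My plan is to exhibit, for each non-$\_$ entry of $\mathbf c$, an explicit $i$-witness in $\rho'$, and then verify the three witness-level properties (witnessing, dominating colour, ordered sequences). The construction splits naturally into two cases according to whether $i > j$ (in which case $c_i = b_i$ is inherited from $\mathbf b$) or $i = j$ (where $c_j = d$ is fresh); the entries $i < j$ are $\_$ and carry no obligation.

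For $i > j$ with $b_i \neq \_\,$, I reuse the original $i$-witness $p_{i,1}, \dots, p_{i,2^i}$ of $\mathbf b$. Position, order, evenness, and inner domination are statements purely about $\rho$, which is a prefix of $\rho'$, so they transfer without change. The only fresh obligation is outer domination at the newly appended vertex $v_{m+1}$, which demands $d = \phi(v_{m+1}) \le \phi(v_{p_{i,2^i}}) = b_i$; this is exactly the hypothesis on indices $i > j$.

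For $c_j = d$, I build a new $j$-witness by concatenating the sub-witnesses of $\mathbf b$ for $i = j-1, j-2, \dots, 0$ in that order, and then appending the position $m+1$. Since each $b_i$ with $i < j$ is present by hypothesis, a geometric sum gives a total length of $\sum_{i=0}^{j-1} 2^i + 1 = 2^j$, as required. Order within each sub-witness is inherited from $\mathbf b$; order across consecutive sub-witnesses follows from the ordered-sequences property of $\mathbf b$; and $m+1$ exceeds every position of $\rho$, so it can be appended. Evenness of every entry other than the last holds because each sub-witness is internally even and its terminal $p_{i,2^i}$ has even colour $b_i$, using the dominating-colour property of $\mathbf b$ together with the hypothesis that $b_i$ is even for $i < j$. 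Outer domination for this new $j$-witness is vacuous because $m+1$ is the final position of $\rho'$.

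The step I expect to require the most care is the \emph{gluing} inner domination across consecutive sub-witnesses in the new $j$-witness: the gaps from $p_{i, 2^i}$ to $p_{i-1, 1}$ for $0 < i \le j-1$, and the final gap from $p_{0,1}$ to $m+1$. These transitions are not controlled by inner domination inside $\mathbf b$; instead they follow from the \emph{outer} domination of each lower sub-witness in $\mathbf b$, which bounds every subsequent vertex of $\rho$ by $b_i$. Combined with $\phi(v_{m+1}) = d$, inner domination then holds at every glue point. After this observation the witness-level bookkeeping is immediate: $c_j = d$ is the dominating colour of the terminal position $m+1$ of the new sub-witness; $c_i = b_i$ for $i > j$ inherits its dominating colour; and the only new ordered-sequences obligation, $p_{i, 2^i} < p_{j-1, 1}$ for each $i > j$, is an instance of the ordered-sequences property of $\mathbf b$ applied to the pair $i > j - 1$.
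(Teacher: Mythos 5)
Your proposal is correct and follows essentially the same route as the paper's own proof: reuse the $i$-witnesses for $i > j$ (outer domination preserved because $b_i \ge d$), and build the new $j$-witness by concatenating the $i$-witnesses for $i < j$ and appending position $m+1$, with inner domination at the glue points supplied by the outer domination of each lower sub-witness. Your treatment is in fact slightly more explicit than the paper's on the ordered-sequences and dominating-colour bookkeeping, but the argument is the same.
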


\begin{proof}
For the indices $i > j$, observe that since $b_i \ge d$, the outer domination of
the corresponding $i$-witnesses continues to hold. For the indices $i < j$,
since we set $c_i = \_$ there are no conditions that need to be satisfied.

To complete the proof, we must argue that there is a $j$-witness that
corresponds to $c_j$. This witness is obtained by concatenating the
$i$-witnesses corresponding to the numbers $b_i$ for $i < j$, and then adding
the vertex $v_{m+1}$ as the final position. This produces a sequence of length
$1 + \sum_{i = 0}^{j - 1} 2^i = 2^j$ as required. Since all $b_i$ with $i < j$
were even, the evenness condition is satisfied. For inner domination, observe
that the outer domination of each $i$-witness ensures that the gaps between the
concatenated sequences are inner dominated, and the fact that $b_0$ 
dominates sequence $v_{p_{0,1}},\ldots,v_m$ ensures that the final subsequence is also dominated by $b_0$ or $d$.
Outer domination is
trivial, since $v_{m+1}$ is the last vertex in $\rho'$. So, we have constructed
a $j$-witness for $\rho'$, and we have shown that $\mathbf{c}$ is a witness for
$\rho'$.
\end{proof}

Note that, differently from Calude et al.~\cite{quasipolynomial}, we also allow this operation
to be performed in the case where $d$ is odd.


\begin{lemma}
\label{lem:up.local}
Suppose that there exists an index $j$ such that $b_j \neq \_$, $d > b_j$, and, for all $i>j$, either $b_i = \_$ or  $b_i \geq d$ hold.
Then setting $c_i = b_i$ for all $i > j$, setting $c_j = d$, and setting $c_i = \_$ for all $i < j$ yields a witness for $\rho'$.
\end{lemma}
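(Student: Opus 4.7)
The plan is to verify the three defining properties of a witness for $\rho'$---existence of an $i$-witness for each non-$\_$ entry, the dominating colour condition, and ordered sequences---by reusing most of $\mathbf{b}$'s witnesses and constructing only one fresh $j$-witness. Since $\mathbf{c}$ agrees with $\mathbf{b}$ at every index $i > j$, sets $c_j = d$, and sets $c_i = \_$ for $i < j$, the verification splits naturally into two cases, and there is nothing to check for the lower indices.

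First I would handle the indices $i > j$ with $c_i = b_i \neq \_$. The old $i$-witness in $\rho$ is syntactically unchanged, so position, order, evenness and inner domination transfer verbatim to $\rho'$. The only property that could fail when we extend $\rho$ by $v_{m+1}$ is outer domination, and the hypothesis $b_i \geq d = \phi(v_{m+1})$ is exactly what is needed to dominate the new final vertex. The dominating colour condition is preserved because $c_i = b_i$, and the ordered sequences constraint relative to the new $j$-witness will be settled at the end.

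For $c_j = d$, the natural candidate is to take the first $2^j - 1$ positions $p_{j,1},\dots,p_{j,2^j - 1}$ of the $j$-witness associated with $b_j$ and append the new position $m+1$. Length, position and ordering are immediate, since $p_{j,2^j - 1} < p_{j,2^j} \leq m < m+1$; evenness holds because the first $2^j - 1$ positions of the old $j$-witness are non-final and therefore even-coloured; the dominating colour condition says exactly $\phi(v_{m+1}) = d = c_j$; and outer domination is trivial as $m+1$ is the last vertex of $\rho'$.

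The main obstacle---and the step where the hypothesis $d > b_j$ is really used---is inner domination across the new gap, from $p_{j,2^j - 1}$ to $m+1$. For this I would combine two facts about the old witness for $\rho$: old inner domination bounds every colour strictly between $p_{j,2^j-1}$ and $p_{j,2^j}$ by $\max\{\phi(v_{p_{j,2^j - 1}}),\, b_j\}$, while old outer domination of $b_j$ bounds every colour after $p_{j,2^j}$ up to position $m$ by $b_j$. Since $d > b_j$, the whole interval up through $v_{m+1}$ lies below $\max\{\phi(v_{p_{j,2^j - 1}}),\, d\}$, as required. Finally, ordered sequences is immediate: the new $j$-witness still begins at $p_{j,1}$, so the original inequality $p_{i,2^i} < p_{j,1}$ persists for every index $i > j$ with $c_i \neq \_$, which completes the verification that $\mathbf{c}$ is a witness for $\rho'$.
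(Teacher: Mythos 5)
Your proposal is correct and follows essentially the same route as the paper's proof: keep the $i$-witnesses for $i > j$ (outer domination surviving because $b_i \geq d$), and build the new $j$-witness by replacing the final position of the old one with $m+1$, using $d > b_j$ together with the old inner and outer domination to close the last gap. Your write-up is somewhat more explicit about why inner domination holds across the extended final gap, but the construction and the use of the hypotheses are identical.
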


\begin{proof}
For all $i > j$, we set $c_i = b_i$. Observe that this is valid, since $b_i \geq d$, and so the outer domination property continues to hold for the
$i$-witness associated with $b_i$. For all $i < j$, we set $c_i = \_$, and this
is trivially valid, since this imposes no requirements upon $\rho'$.

To complete the proof, we must argue that setting $c_j = d$ is valid. Observe
that in $\rho$, the $j$-witness associated with $b_j$ ends at a certain position
$p = p_{j, 2^j}$. We can create a new $j$-witness for $\rho'$ by instead setting
$p_{j, 2^j} = m+1$, that is, we change the last position of the $j$-witness to
point to the newly added vertex. Note that inner domination continues to hold,
since $d > b_j = \phi(v_p)$ and since $v_p$ outer dominated $\rho$. All other
properties trivially hold, and so $\mathbf c$ is a witness for $\rho'$.
\end{proof}

\begin{lemma}
\label{lem:up.stale}
Suppose that for all $j \leq k$ either $b_j = \_$ or $b_j \geq d$. If we
set $c_i = b_i$ for all $i \leq k$, then $\mathbf c$ is a witness for $\rho'$.
\end{lemma}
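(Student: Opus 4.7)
The plan is to verify directly that $\mathbf c = \mathbf b$ satisfies all the properties of a witness for $\rho'$, using the same $i$-witnesses that supported $\mathbf b$ in $\rho$. Since $\rho$ is a prefix of $\rho'$, most of the conditions are inherited for free, and the only genuine obligation is to check that outer domination still holds after appending $v_{m+1}$.

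First I would fix an arbitrary index $i$ with $b_i \neq \_$, and let $p_{i,1}, p_{i,2}, \ldots, p_{i,2^i}$ be the $i$-witness that accompanied $b_i$ in $\rho$. I reuse exactly the same sequence of positions as the $i$-witness supporting $c_i$ in $\rho'$. The Position and Order conditions still hold since the indices are unchanged and still lie in $\{1,\ldots,m\}\subseteq\{1,\ldots,m+1\}$; Evenness is a statement about colours of vertices whose positions have not moved; and Inner domination refers to the colours of vertices in subsequences $v_{p_{i,j}},\ldots,v_{p_{i,j+1}}$ which are entirely contained in $\rho$. So these four conditions are inherited without change.

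The key step is Outer domination, which now requires $\phi(v_l) \le \phi(v_{p_{i,2^i}}) = b_i$ for every $l$ with $p_{i,2^i} < l \le m+1$. For positions $l \le m$ this is exactly the outer domination that $\mathbf b$ already enjoyed in $\rho$. For the new position $l = m+1$ we have $\phi(v_{m+1}) = d$, and the hypothesis of the lemma gives $b_i \ge d$ (the other alternative $b_i = \_$ is excluded since we assumed $b_i \neq \_$). Hence outer domination extends to the new vertex, which is precisely where the assumption of the lemma is needed.

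Finally, the witness-level conditions are immediate: the Dominating colour equality $c_i = b_i = \phi(v_{p_{i,2^i}})$ is unchanged because neither the entries of the sequence nor the positions of the supporting $i$-witnesses have been altered, and the Ordered sequences condition $p_{j,2^j} < p_{i,1}$ for $i<j$ is inherited verbatim from $\mathbf b$. Since every requirement is met, $\mathbf c$ is a witness for $\rho'$. There is no real obstacle here beyond being careful to observe that the only new obligation created by extending $\rho$ to $\rho'$ is dominating the single new colour $d$, which is exactly what the hypothesis guarantees; this is also the reason the lemma is stated as a ``stale'' update, since no entries of the witness need to change.
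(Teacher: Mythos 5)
Your proof is correct and follows essentially the same route as the paper's: the only obligation created by appending $v_{m+1}$ is extending outer domination to the new position, which the hypothesis $b_i \ge d$ supplies, while all other conditions are inherited unchanged. The paper's proof is just a terser statement of exactly this observation.
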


\begin{proof}
Since $d \leq b_j$ for all $j$, the outer domination of every $i$-witness
implied by $\mathbf b$ is not changed. Moreover, no other property of a witness
is changed by the inclusion of $v_{m+1}$, so by setting $\mathbf c = \mathbf b$
we obtain a witness for $\rho'$.
\end{proof}

When we want to update a witness upon
scanning another state $v_{m+1}$, we find the largest witness that
(according to $\sqsubseteq$) can be obtained by applying Lemmas \ref{lem:up.overflow} through \ref{lem:up.stale}.
The largest such witness is quite easy to find: first, there are at most $3k$ to check, but the rule is simply to update the leftmost position in a witness that can be updated.

For a given witness $\mathbf b$ and a vertex $v_{m+1}$, we denote with
\begin{itemize}
 \item $\ru(\mathbf b,v_{m+1})$ the raw update of the witness to $\mathbf c$, as
obtained by the update rules described above.
 \item $\up(\mathbf b,v_{m+1})$ is either $\ru(\mathbf b,v_{m+1})$ if
$\val\big(\ru(\mathbf b,v_{m+1})\big) \leq e$ (where $e$ is the number of
vertices with even colour), or $\up(\mathbf b,v_{m+1}) = \won$ otherwise.
\end{itemize}

\section{Basic Update Game}
With these update rules, we define a forward and a backward basic update game.
The game is played between player \emph{even} and player \emph{odd}.
In this game, player \emph{even} and player \emph{odd} produce a play of the game as usual: if the pebble is on a position of player \emph{even}, then  player \emph{even} selects a successor, and if the pebble is on a position of player \emph{odd}, then  player \emph{odd} selects a successor.

Player \emph{even} can stop any time she likes and evaluate the game using $\mathbf b_0 = \_,\ldots,\_$ as a starting point and the update rule $\mathbf b_{i+1} = \up(\mathbf b_i,v_i)$.
For a forward game, she would process the partial play $\rho^+ = v_0, v_1, v_2, \ldots, v_n$ from left to right, and for the backwards game she would process the partial play $\rho^- = v_n, v_{n-1}, \ldots, v_0$.
In both cases, she has won if $\mathbf b_{n+1} = \won$.

\begin{theorem}
\label{theo:basic}
If player \emph{even} has a strategy to win the (forward or backward) basic update game, then she has a strategy to win the parity game.
\end{theorem}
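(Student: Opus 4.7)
The plan is to reduce Theorem~\ref{theo:basic} to Lemma~\ref{lem:correct} via an inductive invariant on the evaluation procedure. I would prove, by induction on $i$, that the raw update $\ru$, applied step by step to a prefix $v_0, v_1, \ldots, v_n$ of a play (either forwards or backwards), produces at every intermediate step a valid forward witness for the already-processed subsequence. The base case is vacuous, since the all-$\_$ sequence is a witness for the empty prefix. For the inductive step, one checks that at least one of Lemmas~\ref{lem:up.overflow}, \ref{lem:up.local}, \ref{lem:up.stale} always applies to $\mathbf{b}_i$ and $v_i$: specifically, Lemma~\ref{lem:up.stale} covers the case where no entry of $\mathbf{b}_i$ is strictly below $d = \phi(v_i)$, and otherwise Lemma~\ref{lem:up.local} applies at the largest index $j$ with $b_j < d$, since for every index above $j$ we have $b_i = \_$ or $b_i \geq d$. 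Each of the three lemmas explicitly certifies that its output is again a valid witness for the extended prefix, so the invariant is preserved by the update rule, which merely selects the best such output.

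Armed with this invariant, the theorem follows directly. If player~\emph{even} has a winning strategy in the basic update game, then by playing it she forces the parity game through some finite prefix $\rho = v_0, v_1, \ldots, v_n$ for which the evaluation terminates with $\won$. By the definition of $\up$ this means that at some intermediate step the raw update $\ru$ produced a sequence $\mathbf b$ with $\val(\mathbf b) > e$, and by the invariant this $\mathbf b$ is a genuine forward witness of the corresponding processed subsequence. In the backwards variant, that subsequence is an initial segment of $\overleftarrow{\rho}$, and a forward witness for it is by definition a backwards witness for the matching suffix of $\rho$; Lemma~\ref{lem:correct} admits both kinds. Invoking Lemma~\ref{lem:correct} therefore yields a winning strategy for player~\emph{even} in the parity game starting at $v_0$, which is exactly what is required.

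The point requiring care is the inductive step, where one has to verify that $\up$ is well-defined (i.e.\ that at least one update lemma always applies) and, subtly, that Lemma~\ref{lem:up.overflow} is legitimately invoked also when $d$ is odd --- a deviation from the update rule of Calude et al.\ that the paper explicitly flags. Beyond that, the argument is essentially bookkeeping: the three update lemmas propagate the witness property through each step, and a single application of Lemma~\ref{lem:correct} closes the argument for both the forward and the backward variant simultaneously.
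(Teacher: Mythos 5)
Your proposal is correct and follows essentially the same route as the paper: the paper's (much terser) proof likewise reduces the claim to the fact that the update rules preserve witness validity (Lemmas~\ref{lem:up.overflow}--\ref{lem:up.stale}) and then invokes the even-chain/winning-cycle argument of Lemma~\ref{lem:correct}, for both the forward and backward variants. You merely make explicit the inductive invariant and the well-definedness of $\up$ that the paper leaves implicit.
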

\begin{proof}
By definition, we can only have $\mathbf b_{n+1} = \won$ if at some point we
created a witness whose value was more than the total number of even colours in
the game. As we have argued, such a witness implies that a cycle has been
created, and that the largest priority on the cycle is even. Since player
\emph{even} can achieve this no matter what player \emph{odd} does, this implies
that player \emph{even} has a winning strategy for the parity game.
\end{proof}


\section{The Data-structure for the Progress Measure}

Recall that there are two obstacles in implementing the algorithm of Calude et
al.~\cite{quasipolynomial} as a value iteration algorithm. The first (and minor) obstacle is
that it uses forward witnesses, while value iteration naturally uses backward
witnesses. We have already addressed this point by introducing the same measure
for a backward analysis.

The second obstacle is the lack of an order over witnesses that is compatible
with value iteration. While we have introduced an order in the previous
sections, this order is not a natural order. In particular, it is not preserved
under update, nor does it agree with the order over values. As a simple example
consider the following two sequences:
\begin{itemize}
 \item $\mathbf b = \_,4,2$, and
 \item $\mathbf c = 9,8,\_$.
\end{itemize}
While $\val(\mathbf b) = 3 > \val(\mathbf c) = 2$, $\mathbf c \sqsupset \mathbf
b$. In particular, $c_2 \succ b_2$ and $c_1 \succ b_1$ hold. Yet, when using the
update rules when traversing a state with colour $6$, $\mathbf b$ is updated to
$\mathbf b' = 6,\_,\_,$, while $\mathbf c$ is updated to  $\mathbf c' = 9,8,6$.
While $\mathbf c \sqsupset \mathbf b$ held prior to the update, $\mathbf b' \sqsupset \mathbf c'$ holds after the update.
Value iteration, however, needs a natural order that will allow us to choose
the successor with the higher value.

We overcome this problem by allowing the antagonist in our game, player
\emph{odd}, an extra move: prior to executing the update rule for a value
$\mathbf b$, player \emph{odd} may increase the witness $\mathbf b$ in the
$\sqsubseteq$ ordering. The corresponding 
\emph{antagonistic update} is defined as follows.
$$\au(\mathbf b,v) = {\min}_\sqsubseteq\big\{\up(\mathbf c,v) \mid \mathbf c \sqsupseteq \mathbf b \big\}$$

Obtaining $\au(\mathbf b,v)$ is quite simple: 
$\au(\mathbf b, v)$ is either
$\up(\mathbf b, v)$ or it is $\up(\mathbf d, v)$ where $\mathbf d$ is
$${\min}_\sqsubseteq \{ \mathbf {d} = d_k, d_{k-1}, \dots, d_0 \; ; \;
\mathbf{d} \sqsupseteq \mathbf{b} \text{ and } d_0 = \_\}.$$

Observe that if $\mathbf b \sqsubseteq \mathbf b'$ then $\au(\mathbf b,v)
\sqsubseteq \au(\mathbf b',v)$, because the minimum used in $\au(\mathbf b',v)$ ranges over a smaller set.

\section{Antagonistic Update Game}
The antagonistic update game is played like the basic update game, but uses the antagonistic update rule.
I.e.\ player \emph{even} and \emph{odd} play out a play of the game as usual: if the pebble is on a position of player \emph{even}, then  player \emph{even} selects a successor, and if the pebble is on a position of player \emph{odd}, then  player \emph{odd} selects a successor.

Player \emph{even} can stop any time she likes and evaluate the game using $\mathbf b_0 = \_,\ldots,\_$ as a starting point and the update rule $\mathbf b_{i+1} = \au(\mathbf b_i,v_i)$.
For a forward game, she would process the partial play $\rho^+ = v_0, v_1, v_2, \ldots, v_n$ from left to right, and for the backwards game she would process the partial play $\rho^- = v_n, v_{n-1}, \ldots, v_0$.
In both cases, she has won if $\mathbf b_{n+1} = \won$.

\begin{theorem}
\label{theo:antogonistic}
If player \emph{even} has a strategy to win the (forward or backward) antagonistic update game, then she has a strategy to win the parity game.
\end{theorem}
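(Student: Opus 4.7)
The plan is to reduce this statement to Theorem~\ref{theo:basic}: I would take any winning strategy $\sigma$ of player \emph{even} in the antagonistic update game and argue that the very same $\sigma$ also wins the basic update game, after which Theorem~\ref{theo:basic} supplies a winning parity-game strategy.

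The reduction rests on a pointwise comparison of the two update processes. Fix any play $v_0, v_1, v_2, \ldots$ (forward or backward) and write $\mathbf b^{\au}_i$ and $\mathbf b^{\up}_i$ for the witnesses produced after $i$ iterations of the antagonistic and basic rules, both started from $\_,\ldots,\_$. I claim that $\mathbf b^{\au}_i \sqsubseteq \mathbf b^{\up}_i$ for every $i$, and would prove this by induction on $i$. The base case is trivial. For the inductive step, the two ingredients are already recorded in the excerpt: monotonicity of $\au$ in its first argument, noted right after the definition of $\au$, gives $\au(\mathbf b^{\au}_i, v_i) \sqsubseteq \au(\mathbf b^{\up}_i, v_i)$; and since $\mathbf c = \mathbf b^{\up}_i$ is always an admissible choice inside the minimum defining $\au(\mathbf b^{\up}_i, v_i)$, we also have $\au(\mathbf b^{\up}_i, v_i) \sqsubseteq \up(\mathbf b^{\up}_i, v_i) = \mathbf b^{\up}_{i+1}$. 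Chaining the two inequalities closes the step.

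The theorem then follows easily. Against any odd pebble strategy $\tau$, the play that $\sigma$ induces in the basic update game coincides with the one it induces in the antagonistic game---$\tau$ only prescribes pebble moves, and any dependence of $\sigma$ on the observed witnesses can be rewritten as a dependence on the pebble history, which uniquely determines both witness sequences---so $\sigma$ stops at the same moment in both games. At that moment $\mathbf b^{\au} = \won$ by assumption, and because $\won$ is the $\sqsupseteq$-maximum the pointwise comparison forces $\mathbf b^{\up} = \won$ as well, so $\sigma$ wins the basic update game and Theorem~\ref{theo:basic} applies. The argument is insensitive to the direction of traversal, so it covers the forward and backward versions at once. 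I do not foresee a technical obstacle; the one conceptual point to flag is that handing odd the extra inflation power encoded in $\au$ only drags witnesses \emph{down} in the $\sqsubseteq$-order, so a $\won$-outcome in the harder antagonistic game automatically transfers to the easier basic game rather than the other way around.
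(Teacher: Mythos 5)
Your proposal is correct and follows essentially the same route as the paper: an induction showing that the antagonistic witness sequence is pointwise $\sqsubseteq$ the basic one, after which the same strategy wins the basic update game and Theorem~\ref{theo:basic} applies. (Your inductive step chains through $\au(\mathbf b^{\up}_i, v_i)$ via the stated monotonicity of $\au$, whereas the paper chains through $\up(\mathbf a_i, v_i)$; both ultimately rest on the observation that $\mathbf b^{\up}_i$ lies in the set over which the minimum defining $\au$ ranges.)
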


\begin{proof}
We first look at the evaluation of a play  $\rho^+ = v_0, v_1, v_2, \ldots, v_n$ or $\rho^- = v_n, v_{n-1}, \ldots, v_0$ in a forward or backwards game, respectively.
In an antagonistic game, this will lead to a sequence $\mathbf a_0, \mathbf a_1, \ldots, \mathbf a_{n+1}$, while it leads to a sequence $\mathbf b_0, \mathbf b_1, \ldots, \mathbf b_{n+1}$ when using the basic update rule.
We show by induction that $\mathbf b_i \sqsupseteq \mathbf a_i$ holds.

For an induction basis, $\mathbf b_0 = \mathbf a_0 = \_,\ldots,\_$.

For the induction step, if $\mathbf b_i \sqsupseteq \mathbf a_i$, then
\begin{align*}
\mathbf a_{i+1} = \au(\mathbf a_i,v_i) &= \min_\sqsubseteq\big\{\up(\mathbf
c,v_i) \mid \mathbf c \sqsupseteq \mathbf a_i \big\} \\
&\sqsubseteq \up(\mathbf a_i, v_i)  \\
&\sqsubseteq^{IH} \up(\mathbf b_i,v_i) = \mathbf b_{i+1}.
\end{align*}
Thus, when player \emph{even} wins the (forward or backward) antagonistic update game, then she wins the (forward or backward) basic update game using the same strategy.
\end{proof}

It remains to show that, if player \emph{even} has a strategy to win the parity
games, then she has a strategy to win the antagonistic update game. 
For this, we will use the fact that she can, in this case, make sure that the highest number that occurs infinitely often on a run is even.
We exploit this in two steps.
We first introduce a $\downarrow_x$ operator, for
every even number $x$, that removes all but possibly one entry with numbers smaller
than $x$, and adjust the one that possibly remains to $x-1$.
We then argue that, when there are no higher numbers than $x$, this value of the
witnesses obtained after this operator are non-decreasing w.r.t.\ $\sqsupseteq$,
and increase strictly with every occurrence of $x$.

Formally we define, for a witness $\mathbf b = b_k,b_{k-1},\ldots,b_0$ and an
even number $x$, the following.
\begin{itemize}
\item $\mathbf b\downarrow_x$ to be $\mathbf b$ if, for all $i\leq k$, $b_i = \_$ or $b_i \geq x$ holds.
\item 
Otherwise, let $i = \max\{s \leq k \mid b_{s} \neq \_$ and $b_{s} < x\}$. We define
$\mathbf b\downarrow_x = b_k',b_{k-1}',\ldots,b_0'$ with $b_j' = b_j$ for all
$j>i$, $b_i' = x-1$, and  $b_j' = \_$ for all $j<i$.
\end{itemize}

\begin{lemma}
\label{lem:up.game}
The $\downarrow_x$ operator provides the following guarantees:
\begin{enumerate}
 \item $\mathbf b \sqsupset \mathbf a \; \Rightarrow \; \mathbf b\downarrow_x \sqsupseteq \mathbf a\downarrow_x$
 \item $\phi(v) < x  \; \Rightarrow \; \up(\mathbf b,v)\downarrow_x \sqsupseteq \mathbf b\downarrow_x$
 \item $\phi(v) < x  \; \Rightarrow \; \au(\mathbf b,v)\downarrow_x \sqsupseteq \mathbf b\downarrow_x$
 \item $\phi(v) = x  \; \Rightarrow \; \up(\mathbf b,v)\downarrow_x \sqsupset \mathbf b\downarrow_x$
 \item $\phi(v) = x  \; \Rightarrow \; \au(\mathbf b,v)\downarrow_x \sqsupset \mathbf b\downarrow_x$
\end{enumerate}
\end{lemma}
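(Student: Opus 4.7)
The plan is to prove the five items in the order stated, handling (1) first, then (2) and (4) in parallel by a case analysis on the update rule applied, and finally deducing (3) and (5) directly from (1) together with (2) or (4) via the definition of $\au$. Throughout, I will use the assumption announced just before the lemma that $x$ dominates all colours occurring in the play, so every non-$\_$ entry of any witness we consider is either an odd number less than $x$ or an even number at most $x$.

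For part (1), I would case-split on the highest index $j^\ast$ at which $\mathbf b$ and $\mathbf a$ disagree, together with the cut-off indices $i_{\mathbf b}$ and $i_{\mathbf a}$ picked out by $\downarrow_x$. The key observation is that $\mathbf b$ and $\mathbf a$ agree at every index above $j^\ast$, so the set of ``below-$x$'' indices strictly above $j^\ast$ is the same in both sequences; consequently, if one cut-off lies above $j^\ast$ then so does the other and they coincide, making $\mathbf b\downarrow_x = \mathbf a\downarrow_x$. Otherwise both cut-offs are at or below $j^\ast$, the two projections still agree strictly above $j^\ast$, and the lexicographic comparison is settled at position $j^\ast$. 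The remaining short sub-cases rest on two facts about the $\succeq$ order: $x-1$ is $\succeq$ every value strictly less than $x$, and $x-1 \succ \_$.

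For parts (2) and (4), I would split on which of Lemmas~\ref{lem:up.overflow}, \ref{lem:up.local}, \ref{lem:up.stale} produces $\ru(\mathbf b, v)$, and within each on the position of $i_{\mathbf b}$ relative to the update position $j$. The Lemma~\ref{lem:up.stale} case is immediate for part (2) since $\mathbf c = \mathbf b$. For Lemmas~\ref{lem:up.overflow} and \ref{lem:up.local}, positions $>j$ are preserved, so the decisive point is whether $c_j = d$ contributes to $\mathbf c$'s cut-off: under part (2) it does, and $\mathbf c\downarrow_x$ records $x-1$ at $j$, matching whatever $\mathbf b\downarrow_x$ records there up to entries that $\downarrow_x$ equates; under part (4), $d = x$ is even and $\ge x$, so $\mathbf c\downarrow_x$ records the even value $x$ at $j$, which is strictly $\succ x-1$, giving the strict inequality. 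The main obstacle is to rule out the cases where an update at $j$ would discard a preserved high entry $b_j \ge x$, since this would harm the $\downarrow_x$ comparison. For these I would invoke the optimality of $\up$: Lemma~\ref{lem:up.local} requires $d > b_j$ and so cannot fire when $b_j \ge x \ge d$; and if Lemma~\ref{lem:up.overflow} at $j$ would overwrite $b_j \ge x$, then either Lemma~\ref{lem:up.stale} applies and yields a strictly larger witness by preserving $b_j$, or the index $i^\ast = \max\{s < j : b_s \ne \_,\, b_s < d\}$ exists and Lemma~\ref{lem:up.overflow} at $i^\ast$ applies (with the evenness and domination conditions inherited from the application at $j$), producing a witness that preserves $b_j$ and is therefore strictly preferred in $\sqsubseteq$. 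In the boundary situation for part (4) where $\up$ is forced to leave $\mathbf b$ unchanged, the assumption that $x$ is the largest occurring colour together with the interpretation of value-overflow as $\won$ forces $\up(\mathbf b, v) = \won$, which strictly dominates every witness.

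Parts (3) and (5) then follow without further work. If $\mathbf c^\ast \sqsupseteq \mathbf b$ is a $\sqsubseteq$-minimiser realising $\au(\mathbf b, v) = \up(\mathbf c^\ast, v)$, part (1) gives $\mathbf c^\ast \downarrow_x \sqsupseteq \mathbf b \downarrow_x$, and parts (2) and (4) give $\up(\mathbf c^\ast, v)\downarrow_x \sqsupseteq \mathbf c^\ast \downarrow_x$, with strict $\sqsupset$ when $\phi(v) = x$; chaining these inequalities closes the argument.
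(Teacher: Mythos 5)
Your proof follows essentially the same route as the paper's: part (1) by comparing at the highest index of disagreement together with the cut-off positions, parts (2)/(4) by examining the highest position changed by the raw update and ruling out the harmful overwrite of an even entry $\geq x$ (a point the paper dismisses with a one-line parenthetical ``otherwise $v$ does not overwrite position $i$'' but which you correctly justify via the dominance of the stale rule or of overflow applied at the lower index $i^\ast$), and parts (3)/(5) by chaining (1) with (2)/(4) through the minimiser defining $\au$. The only slip is that one supporting fact for (1) is stated backwards --- it is false that $x-1 \succeq c$ for every $c < x$, since even values below $x$ beat the odd value $x-1$ under $\succeq$; what the sub-cases at position $j^\ast$ actually need is that $x-1 \succ \_$, that $x-1 \succ$ any odd value above $x$, and that any even value $\geq x$ is $\succ x-1$, all of which are true and close the case analysis along exactly the lines you describe.
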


\begin{proof}
For (1), let $i \leq k$ be the highest position with $b_i \neq a_i$, and thus with $b_i \succ a_i$ (as $\mathbf b \sqsupset \mathbf a$).
If $b_i \succeq x$ or $x+1  \succeq a_i$, the claim follows immediately (and we have $\mathbf b\downarrow_x \sqsupset \mathbf a\downarrow_x$).
For the case  $x \succ b_i \succ a_i \succ x+1$, this position would be replaced
by $x-1$ and all smaller positions by $\_$ by the $\downarrow_x$ operator (and we have $\mathbf b\downarrow_x = \mathbf a\downarrow_x$).

For (2), the highest position $i \leq k$ for which $\mathbf a =\up(\mathbf b,v)$ and $\mathbf b$ differ (if any) satisfies $a_i < x$ and $b_i \prec x$ (the latter holds because otherwise $v$ does not overwrite position $i$ by this update rule).
If $b_i  \prec x+1$, then we get $\up(\mathbf b,v)\downarrow_x \sqsupset \mathbf b\downarrow_x$; otherwise we get  $\up(\mathbf b,v)\downarrow_x = \mathbf b\downarrow_x$.

(3) follows from (1) and (2).

For (4),  $\mathbf a =\up(\mathbf b,v)$ and $\mathbf b$ differ in some highest position $i \leq k$, and for that position, $x = a_i \succ b_i$ holds. Thus, $\up(\mathbf b,v)\downarrow_x \sqsupset \mathbf b\downarrow_x$.

(5) follows with (1) and (4).
\end{proof}

This almost immediately implies the correctness.

\begin{theorem}
If player \emph{even} can win the parity game from a position $v$, then she can win the (forward and backward) antagonistic update game from $v$. 
\end{theorem}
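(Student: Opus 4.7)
The plan is to reuse even's memoryless winning strategy $\sigma$ from the parity game as her strategy in the antagonistic update game, and to use the $\downarrow_x$ operator from Lemma \ref{lem:up.game} as a monovariant that forces the witness to reach $\won$ within a bounded number of steps. Since parity games are memoryless-determined, such a $\sigma$ exists, and any infinite play $v = v_0, v_1, v_2, \dots$ generated by $\sigma$ against an arbitrary odd strategy has the property that the highest color occurring infinitely often, call it $x$, is even.

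Fix such a play. There is a step $N$ beyond which no color greater than $x$ appears. By Lemma \ref{lem:up.game}(3), each antagonistic update with $\phi(v_i) \leq x$ satisfies $\au(\mathbf{b}, v_i) \downarrow_x \sqsupseteq \mathbf{b} \downarrow_x$, so the sequence $\mathbf{b}_i \downarrow_x$ is non-decreasing in $\sqsupseteq$ after step $N$. By part (5), whenever $\phi(v_i) = x$ the sequence increases strictly. Since there are only finitely many witnesses, any $\sqsupset$-chain of non-$\won$ witnesses has some bounded length $M$. But $x$ occurs infinitely often in the play, so after at most $M$ visits to color $x$ past step $N$ the witness must already equal $\won$---otherwise the $\downarrow_x$-images would form a strictly increasing chain of length exceeding $M$ among non-$\won$ witnesses, a contradiction.

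This yields the theorem for both variants. In the forward game, even plays $\sigma$ and stops the first time the incrementally computed $\mathbf{b}_i$ equals $\won$; by the preceding argument this happens in finitely many steps. In the backward game, even plays $\sigma$ for $n$ steps where $n$ is chosen large enough that the suffix $v_{N+1}, \dots, v_n$ already contains more than $M$ occurrences of color $x$; then she stops. The backward evaluation processes this suffix first, starting from $v_n$, and reaches $\won$ before it ever touches $v_N$ or earlier vertices. Since $\won$ is absorbing under any update, the final value $\mathbf{b}_{n+1}$ is $\won$ and even wins.

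The main obstacle is that odd's extra antagonistic move, which replaces $\mathbf{b}$ by some $\mathbf{c} \sqsupseteq \mathbf{b}$ before the update, could in principle be used to spoil the monovariant; this is precisely why parts (3) and (5) of Lemma \ref{lem:up.game} are stated for $\au$ rather than for $\up$, and is the reason the antagonistic update was engineered in the first place. A secondary subtlety is that $x$, $N$, and the density of $x$-occurrences all depend on odd's choices, so they are unknown to even in advance; but she does not need to know them, since she simply plays $\sigma$ and stops the first time the current evaluation reads $\won$, which our argument guarantees must happen after finitely many steps no matter how odd plays.
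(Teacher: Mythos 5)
Your proposal is correct and follows essentially the same route as the paper: play a winning parity strategy so that the highest colour $x$ occurring infinitely often is even, then use Lemma~\ref{lem:up.game} to argue that after the last occurrence of a colour above $x$ the $\downarrow_x$-image of the witness is non-decreasing and strictly increases at each visit to $x$, so finiteness of that image forces $\won$ after boundedly many such visits. Your write-up merely spells out the forward/backward bookkeeping and the stopping rule that the paper's three-sentence proof leaves implicit.
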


\begin{proof}
Player \emph{even} can play such that the highest colour that occurs in a run infinitely many times is even.
She can thus in particular play to make sure that, at some point in the run, an
even colour $x$ has occurred more often that the size of the image of
$\downarrow_x$ after the last occurrence of a priority higher than $x$.
By Lemma \ref{lem:up.game}, evaluating the forward or backward antagonistic update game at this point will lead to a win of player \emph{even}.
\end{proof}

These results directly provide the correctness of all four games described.

\begin{corollary}
\label{cor:correctness}
Player \emph{even} can win the forward and backward antagonistic and basic update game from a position $v$ if, and only if, she can win the parity game from $v$.
\end{corollary}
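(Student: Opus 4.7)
My plan is to assemble the corollary by chaining the three preceding results into a cycle of implications that handles all four variants uniformly. Concretely, I would combine: (i) the theorem immediately preceding the corollary, which gives parity-win $\Rightarrow$ antagonistic-win for both the forward and the backward update games; (ii) the inductive argument inside the proof of Theorem~\ref{theo:antogonistic}, which establishes $\mathbf b_i \sqsupseteq \mathbf a_i$ and therefore antagonistic-win $\Rightarrow$ basic-win under the same strategy, again in both directions, because $\au$ is dominated by $\up$ so any update sequence reaching $\won$ antagonistically also reaches $\won$ under the basic rule; and (iii) Theorem~\ref{theo:basic}, which gives basic-win $\Rightarrow$ parity-win in both directions.

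Composing these three implications closes the cycle $\text{parity} \Rightarrow \text{antagonistic} \Rightarrow \text{basic} \Rightarrow \text{parity}$, so each of the three ``game'' notions is equivalent to winning the parity game. Since every step of the cycle is stated uniformly for the forward and backward variants, the same cycle instantiated with ``forward'' throughout yields the forward half of the corollary, and instantiated with ``backward'' throughout yields the backward half. No cross-direction reasoning is needed, so all four asserted equivalences drop out at once.

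There is essentially no obstacle here: the genuinely technical work — soundness of backward witnesses via Lemma~\ref{lem:correct}, dominance of $\up$ over $\au$, and the $\downarrow_x$-based extraction of a witness-increasing play from a winning parity strategy in Lemma~\ref{lem:up.game} — has already been done in the cited results. The only care required is bookkeeping: fixing a direction, citing the three theorems in the stated order, and observing that the direction was arbitrary. I therefore expect the written proof to be at most a few lines long, essentially a one-sentence invocation of the cyclic chain together with the remark that it applies verbatim in either direction.
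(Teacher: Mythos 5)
Your proposal is correct and matches the paper exactly: the paper gives no separate proof of the corollary, stating only that it follows directly from the preceding results, namely the cycle parity $\Rightarrow$ antagonistic $\Rightarrow$ basic $\Rightarrow$ parity, instantiated separately for the forward and backward variants. You also correctly note the one subtle point, that the intermediate implication antagonistic-win $\Rightarrow$ basic-win must be extracted from the induction inside the proof of Theorem~\ref{theo:antogonistic} (not just its statement) in order to obtain parity-win $\Rightarrow$ basic-win.
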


\section{Value Iteration}
The antagonistic update game offers a direct connection to \emph{value iteration}.
For value iteration, we use a \emph{progress measure}, a function $\iota : V
\rightarrow \mathbb W$, where $\mathbb W$ denotes the set of possible backwards witnesses.
That is, a progress measure assigns a backwards witness to each vertex.

Let $\mathbf b_v = \max_\sqsubseteq\{\au(\iota(s),v) \mid (v,s) \in E\}$ for $v
\in V_e$ and $\mathbf b_v = \min_\sqsubseteq\{\au(\iota(s),v) \mid (v,s) \in
E\}$ for $v \in V_o$. We say that $\iota$ can be lifted at $v$ if $\iota(v)
\sqsubset \mathbf b_v$. When $\iota$ is liftable at $v$, we define by
$\lift(\iota,v)$ the function $\iota'$ with $\iota'(v) = \mathbf b_v$ and
$\iota'(v') = \iota(v')$ for all $v' \neq v$. We extend the lift operation to
every non-empty set $V' \subseteq V$ of liftable positions, where $\iota' =
\lift(\iota,V')$ updates all values $v \in V'$ concurrently.

A progress measure is called consistent if it cannot be lifted at any vertex $v \in V$.
The \emph{minimal consistent progress measure} $\iota_{\min}$ is the smallest (w.r.t.\ the partial order in the natural lattice defined by pointwise comparison) progress measure that satisfies
\begin{itemize}
 \item for all $v \in V_e$ that $\iota(v) \sqsupseteq \max_\sqsubseteq\{\au(\iota(s),v) \mid (v,s) \in E\}$, and 
 \item for all $v \in V_o$ that $\iota(v) \sqsupseteq \min_\sqsubseteq\{\au(\iota(s),v) \mid (v,s) \in E\}$.
\end{itemize}

As $\au(\mathbf b,v)$ is monotone in $\mathbf b$ by definition and the state space is finite, we get the following.

\begin{lemma}
\label{lem:well.defined}
The minimal consistent progress measure $\iota_{\min}$ is well defined. 
\end{lemma}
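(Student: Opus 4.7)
The plan is to cast $\iota_{\min}$ as the least fixed point of a monotone operator on a finite complete lattice and appeal to Knaster--Tarski. First, observe that the space $\mathbb W$ of backwards witnesses (augmented with $\won$) is finite, and that $\sqsupseteq$ is in fact a total order: the order $\succeq$ on $\cu$ arranges elements as
\[
\_\,\prec\,\cdots\,\prec\,5\,\prec\,3\,\prec\,1\,\prec\,2\,\prec\,4\,\prec\,\cdots,
\]
and $\sqsupseteq$ is its lexicographic extension, which inherits totality from $\succeq$. Consequently the space of progress measures $V \to \mathbb W$, equipped with the pointwise lifting of $\sqsupseteq$, is a finite (hence complete) lattice whose meets and joins are computed coordinatewise.

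Next, I would introduce the operator
\[
\Phi(\iota)(v) = \begin{cases}
\max_\sqsubseteq\{\au(\iota(s), v) \mid (v,s) \in E\} & \text{if } v \in V_e, \\
\min_\sqsubseteq\{\au(\iota(s), v) \mid (v,s) \in E\} & \text{if } v \in V_o,
\end{cases}
\]
which is well defined since every vertex has at least one outgoing edge and $\sqsubseteq$ is total. Consistency of $\iota$---non-liftability at every vertex---then coincides with the post-fixed point condition $\iota \sqsupseteq \Phi(\iota)$, using totality of $\sqsubseteq$ to rewrite ``$\iota(v) \not\sqsubset \Phi(\iota)(v)$'' as ``$\iota(v) \sqsupseteq \Phi(\iota)(v)$''. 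The essential verification is that $\Phi$ is monotone: if $\iota \sqsubseteq \iota'$ pointwise, the stated monotonicity of $\au$ in its first argument yields $\au(\iota(s),v) \sqsubseteq \au(\iota'(s),v)$ for every edge $(v,s)$, and both $\max_\sqsubseteq$ and $\min_\sqsubseteq$ preserve this inequality.

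With monotonicity established, Knaster--Tarski on the finite complete lattice of progress measures delivers a least fixed point $\mu\Phi$, and its standard characterisation as the infimum of the set of post-fixed points identifies it with the smallest consistent progress measure; hence $\iota_{\min} = \mu\Phi$ is well defined. Equivalently, one can exhibit $\iota_{\min}$ constructively by iterating $\Phi$ upwards from the bottom element $v \mapsto (\_, \ldots, \_)$, with finiteness of the lattice bounding the number of iterations. The only non-routine ingredient I anticipate is justifying the translation between consistency and the post-fixed point condition, for which totality of $\sqsupseteq$ is essential; the remainder is a standard application of the monotone fixed-point machinery.
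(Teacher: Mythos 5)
Your proposal is correct and rests on the same two ingredients as the paper's own proof: the monotonicity of $\au$ in its first argument and the finiteness of $\mathbb W$. The paper verifies the relevant instance of Knaster--Tarski by hand---it exhibits the all-$\won$ measure as a consistent one, checks that the pointwise minimum of two consistent measures is again consistent, and uses finiteness to take the minimum over all consistent measures---whereas you invoke the general fixed-point theorem on the finite complete lattice; the mathematical content is the same, and your explicit remark that totality of $\sqsupseteq$ is what identifies non-liftability with the inequality $\iota \sqsupseteq \Phi(\iota)$ makes precise a point the paper leaves implicit.
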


\begin{proof}
First, a consistent progress measure always exists:
the function that maps all states to $\won$ is a consistent progress measure.

Second if we have two consistent progress measures $\iota$ and $\iota'$, then the pointwise minimum $\iota'': v \mapsto \min_\sqsubseteq\{\iota(v),\iota'(v)\}$ is a consistent progress measure.
To see this, we assume w.l.o.g.\ that $\iota(v) \sqsubseteq \iota'(v)$.

For $v \in V_e$ we get
$\iota''(v) = \iota(v)  \sqsupseteq \max_\sqsubseteq\{\au(\iota(s),v) \mid (v,s) \in E\} \sqsupseteq \max_\sqsubseteq\{\au(\iota''(s),v) \mid (v,s) \in E\}$, using that $ \iota''(s) \sqsubseteq \iota(s)$ holds for all $s \in V$.

Likewise, we get for $v \in V_o$ that $\iota''(v) = \iota(v)  \sqsupseteq \min_\sqsubseteq\{\au(\iota(s),v) \mid (v,s) \in E\} \sqsupseteq \min_\sqsubseteq\{\au(\iota''(s),v) \mid (v,s) \in E\}$, using again that $ \iota''(s) \sqsubseteq \iota(s)$ holds for all $s \in V$.

As the state space is finite, we get the minimal consistent progress measure as a pointwise minimum of all consistent progress measures.
\end{proof}

Moreover, we can compute the minimal consistent progress measure
by starting with the initial progress measure $\iota_0$, which maps all vertices
to the minimal witness $\_,\ldots,\_$, and iteratively lifting.

\begin{lemma}
\label{lem:construct}
The minimal consistent progress measure $\iota_{\min}$ can be obtained by any sequence of lift operations on liftable positions, starting from $\iota_0$. 
\end{lemma}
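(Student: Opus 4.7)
The plan is to show three things: (i) any sequence of lifts terminates, (ii) every intermediate progress measure is pointwise $\sqsubseteq \iota_{\min}$, and (iii) the terminal progress measure is consistent. Together, (ii) and (iii) force the terminal value to equal $\iota_{\min}$, because $\iota_{\min}$ is the pointwise smallest consistent progress measure.

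For termination, I would argue that $\mathbb W$ is finite (each witness is a sequence of length $k+1$ over the finite set $\cu$), so the product lattice of progress measures is finite. Every application of $\lift$ to a liftable set $V'$ strictly increases at least one coordinate with respect to $\sqsubseteq$, since for each $v \in V'$ the new value $\mathbf b_v$ strictly dominates $\iota(v)$ by definition of liftability. A strictly increasing sequence in a finite lattice must terminate, so any maximal sequence of lifts reaches a fixed point, and at the fixed point no vertex is liftable, i.e.\ the result is consistent.

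The main step is the invariant $\iota \sqsubseteq \iota_{\min}$, proved by induction on the number of lifts. The base case $\iota_0 \sqsubseteq \iota_{\min}$ is immediate because $\_,\ldots,\_$ is the smallest element of $\mathbb W$. For the induction step, suppose the invariant holds for $\iota$ and we compute $\iota' = \lift(\iota,V')$. For each $v \in V'$ and each successor $s$ with $(v,s) \in E$, the induction hypothesis gives $\iota(s) \sqsubseteq \iota_{\min}(s)$, and the monotonicity of $\au$ (established immediately after the definition of $\au$ in the previous section) yields $\au(\iota(s),v) \sqsubseteq \au(\iota_{\min}(s),v)$. Taking the $\max_\sqsubseteq$ (for $v \in V_e$) or $\min_\sqsubseteq$ (for $v \in V_o$) preserves this inequality, so
\begin{equation*}
\iota'(v) \;=\; \mathbf b_v \;\sqsubseteq\; \mathsf{opt}_\sqsubseteq\{\au(\iota_{\min}(s),v) \mid (v,s) \in E\} \;\sqsubseteq\; \iota_{\min}(v),
\end{equation*}
where the last inequality is the consistency of $\iota_{\min}$ and $\mathsf{opt}$ is $\max$ or $\min$ depending on the owner of $v$. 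Coordinates outside $V'$ are unchanged, hence still bounded by $\iota_{\min}$.

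Combining the pieces: any maximal sequence of lifts produces a consistent progress measure $\iota^*$ with $\iota^* \sqsubseteq \iota_{\min}$; by the minimality of $\iota_{\min}$ we also have $\iota_{\min} \sqsubseteq \iota^*$, so $\iota^* = \iota_{\min}$. The potential obstacle is verifying that $\au$ really is monotone in its first argument on the pointwise order of backwards witnesses, but this was already observed when $\au$ was introduced, so the argument goes through cleanly.
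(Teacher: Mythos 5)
Your proposal is correct and follows essentially the same route as the paper's own proof: the invariant $\iota_i \sqsubseteq \iota_{\min}$ established by induction via the monotonicity of $\au$ and the defining inequalities of $\iota_{\min}$, termination from strict increase in a finite lattice, and the conclusion that the (consistent) terminal measure must coincide with $\iota_{\min}$ by minimality. No gaps; the only difference is presentational, in that the paper folds the termination argument into a closing remark rather than stating it as a separate step.
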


\begin{proof}
We show that, for any sequence $\iota_0, \iota_1, \ldots, \iota_n$ of progress
measures constructed by a sequence of lift operations,
for all $v \in V$, and for all $i \leq n$, $\iota_i(v) \sqsubseteq \iota_{\min}(v)$ holds.

For the induction basis, $\iota_0(v)$ is the minimal element for all $v \in V$, such that  $\iota_0(v) \sqsubseteq \iota_{\min}(v)$ holds trivially.
For the induction step, let $V_i \subseteq V$ be a set of liftable position for $\iota_i$ and $\iota_{i+1} = \lift(\iota_i,V_i)$.
We now make the following case distinction.
\begin{itemize}
\item
For $v \in V_i\cap V_e$, we have $\iota_{i+1}(v) = \max_\sqsubseteq\{\au(\iota(s),v) \mid (v,s) \in E\} \sqsubseteq_{IH} \max_\sqsubseteq\{\au(\iota_{\min}(s),v) \mid (v,s) \in E\} \sqsubseteq \iota_{\min}(v)$.

\item 
For $v \in V_i\cap V_o$, we have $\iota_{i+1}(v) = \min_\sqsubseteq\{\au(\iota(s),v) \mid (v,s) \in E\} \sqsubseteq_{IH} \min_\sqsubseteq\{\au(\iota_{\min}(s),v) \mid (v,s) \in E\} \sqsubseteq \iota_{\min}(v)$.

\item For $v \notin V_i$, we have  $\iota_{i+1}(v) = \iota_i(v) \sqsubseteq_{IH} \iota_{\min}(v)$.
\end{itemize}
This closes the induction step.

While we have proven that the value of the progress measures cannot surpass the
value of $\iota_{\min}$ at any vertex, each liftable progress measure $\iota_i$
is succeeded by a progress measure  $\iota_{i+1}$, which is nowhere smaller, and strictly increasing for some vertices. Thus, this sequence terminates eventually by reaching a non-liftable progress measure.
But non-liftable progress measures are consistent.

Thus, we eventually reach a consistent progress measure $\iota_n$ which is pointwise no larger than $\iota_{\min}$;
i.e.\ we eventually reach $\iota_{\min}$.
\end{proof}

It is simple to get from establishing that $\iota_{\min}(v) = \won$ holds to a winning strategy of player \emph{even} in the antagonistic update game.

\begin{lemma}
If $\iota_{\min}(v) = \won$, then  player \emph{even} has a strategy to win the antagonistic update game when starting from $v$.
\end{lemma}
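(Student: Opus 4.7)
The plan is to lift a memoryless strategy $\sigma$ for \emph{even} out of the progress measure and to show it is winning in the (backward) antagonistic update game whenever $\iota_{\min}(v) = \won$. I would let $\sigma$ pick, at each even vertex $u \in V_e$, a successor $s^*(u)$ attaining $\max_\sqsubseteq \{\au(\iota_{\min}(s),u) \mid (u,s) \in E\}$. Minimality and consistency of $\iota_{\min}$ then give $\au(\iota_{\min}(s^*(u)),u) = \iota_{\min}(u)$ at even $u$, and, at odd $u$, every successor $s$ satisfies $\au(\iota_{\min}(s),u) \sqsupseteq \iota_{\min}(u)$.

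Along any play $v_0, v_1, v_2, \ldots$ consistent with $\sigma$, I would first record the pointwise inequality $\iota_{\min}(v_i) \sqsubseteq \au(\iota_{\min}(v_{i+1}), v_i)$ and, using monotonicity of $\au$ in its first argument (stated just after the definition of $\au$), iterate it to
\[
\iota_{\min}(v_0) \sqsubseteq \au(\au(\cdots \au(\iota_{\min}(v_n), v_{n-1}) \cdots, v_1), v_0)
\]
for every $n$. Since $\iota_{\min}(v_0) = \won$, the right-hand side is $\won$ for every $n$.

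The remaining, and main, step is to transfer this to the \emph{actual} backward evaluation of the play, which starts from $\_,\ldots,\_$ at $v_n$ rather than from $\iota_{\min}(v_n)$. I would route through the parity game. Since $\sigma$ is memoryless, by the positional determinacy of parity games I may assume \emph{odd} responds with a memoryless strategy, so the resulting play is a lasso. It then suffices to show that the cycle of this lasso carries an even maximum colour: otherwise, the $\au$-composition around the cycle, which is monotone on the finite witness lattice, would stabilise at a fixed point strictly below $\won$, contradicting the chain above when applied to arbitrarily long plays. Once the cycle maximum $x$ is known to be even, Lemma~\ref{lem:up.game}(5) guarantees that each traversal of the cycle strictly increases the $\downarrow_x$-image of the accumulated backward witness, so after finitely many traversals the accumulated witness equals $\won$, and \emph{even} stops at that moment. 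Equivalently, once $\sigma$ is seen to be winning in the parity game, the theorem preceding Corollary~\ref{cor:correctness} finishes the argument.

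The step I expect to be the main obstacle is the exclusion of an odd maximum colour on the cycle. It relies on a careful combination of the update rules (Lemmas~\ref{lem:up.overflow}--\ref{lem:up.stale}) with the definition of $\au$: an odd colour in an update wipes out all lower positions of the witness, bounding the value $\val$ that can be accumulated around an odd-dominated cycle and, fed back into the chain above, contradicting $\iota_{\min}(v_0) = \won$.
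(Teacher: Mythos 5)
You correctly isolate the crux---the actual backward evaluation starts from $\_,\ldots,\_$ at the far end of the play rather than from $\iota_{\min}$---but your resolution of it has a genuine gap. You reduce the lemma to showing that $\sigma$ wins the \emph{parity} game from $v$, and for that you must rule out an odd maximum colour $x$ on the cycle of the lasso. Your argument is that the $\au$-composition $F$ around the cycle ``stabilises at a fixed point strictly below $\won$, contradicting the chain''. But the chain only yields $\won = \iota_{\min}(v_0) \sqsubseteq G\big(F^k(\iota_{\min}(u))\big)$, where $u$ is a cycle vertex and $G$ is the composition along the stem of the lasso; this is perfectly consistent with $F$ stabilising at a fixed point strictly below $\won$ that $G$ then maps to $\won$, and it is also consistent with $\iota_{\min}(u)=\won$ already holding on the cycle (note that $\won$ does not propagate along the chain: $\au(\mathbf b, v)=\won$ does not force $\mathbf b = \won$, since $\up$ jumps to $\won$ as soon as the raw update exceeds value $e$). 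So no contradiction is obtained, and closing this step would require an analogue of Lemma~\ref{lem:up.game} for odd dominating colours that the paper does not provide. You flag this step yourself as the main obstacle; it is, and it is not closed. (Also, nothing proved before this lemma lets you deduce ``$\iota_{\min}(v)=\won$ implies \emph{even} wins the parity game'' for free; the converse direction and determinacy only constrain the vertices with $\iota_{\min}(v)\neq\won$.)

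The paper avoids the detour through the parity game entirely. It uses the \emph{sequence} of intermediate progress measures $\iota_0,\iota_1,\ldots,\iota_n$ produced by the lifting procedure of Lemma~\ref{lem:construct} (with $n$ chosen so that $\iota_n(v)=\won$), not just $\iota_{\min}$: player \emph{even} builds a play $v_n,v_{n-1},\ldots,v_0$ of length $n$, choosing at $v_i\in V_e$ a successor $v_{i-1}$ with $\iota_i(v_i)\sqsubseteq \au(\iota_{i-1}(v_{i-1}),v_i)$, and then proves by induction that the accumulated backward witness satisfies $\mathbf b_i \sqsupseteq \iota_i(v_i)$. The base case is exactly where your route gets stuck and theirs does not: the evaluation starts from $\_,\ldots,\_$, which equals $\iota_0(v_0)$ by construction; the induction step is the lifting inequality plus monotonicity of $\au$, and the conclusion $\mathbf b_n \sqsupseteq \iota_n(v_n)=\won$ is immediate. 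To salvage your approach you would essentially have to reprove this induction anyway.
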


\begin{proof}
We can construct the strategy in the following way:
starting in state $v_n =v$, where $n$ is the length of the play we will create, player \emph{even} selects for a state $v_i \in V_e$ with $i>0$ a successor $v_{i-1}$ such that $\iota_i(v_i) \sqsubseteq \au(\iota_{i-1}(v_{i-1}),v_i)$.
Note that such a successor must always exist.
Note also that, if  $v_i \in V_o$ with $i>0$, then $\iota_i(v_i) \sqsubseteq \au(\iota_{i-1}(v_{i-1}),v_i)$ holds for all successors $v_{i-1}$ of $v_i$ by definition.

Assume that player \emph{even} selects a successor from her vertices as described above, and $v_n, v_{n-1}, \ldots, v_0$ is a play created this way.
Let $\mathbf b_0 = \_,\ldots,\_$ be the minimal element of $\mathbb W$, and $\mathbf b_{i+1} = \au(\mathbf b_i,v_{i+1})$.
Then we show by induction that $\mathbf b_i \sqsupseteq \iota_i(v_i)$.

For the induction basis, we have $\mathbf b_0 = \iota_0(v_0)$ by definition.
For the induction step, we have $\iota_{i+1}(v_{i+1}) \sqsubseteq \au(\iota_{i}(v_i),v_{i+1}) \sqsubseteq^{IH} \au(\mathbf b_i,v_{i+1}) = \mathbf b_{i+1}$.

Thus, we get $\mathbf b_n \sqsupseteq \iota_n(v_n) = \won$, and player \emph{even} wins the antagonistic update game.
\end{proof}

At the same time, player \emph{even} cannot win from any vertex $v$ with $\iota_{\min}(v) \neq \won$, and $\iota_{\min}$ provides a witness strategy for player \emph{odd} for this.

\begin{lemma}
Player \emph{even} cannot win from any vertex $v$ with $\iota_{\min}(v) \neq \won$, and $\iota_{\min}$ provides a witness strategy for player \emph{odd}.
\end{lemma}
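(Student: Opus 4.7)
The plan is to extract from $\iota_{\min}$ a memoryless strategy $\tau$ for player \emph{odd} and show that it defeats player \emph{even} in the backward antagonistic update game starting at $v$. By Corollary~\ref{cor:correctness}, this is enough to conclude that player \emph{even} also cannot win the parity game from $v$. Minimality of $\iota_{\min}$ (Lemma~\ref{lem:construct}) forces equality in the consistency constraint: for every $u \in V_o$, $\iota_{\min}(u) = \min_\sqsubseteq\{\au(\iota_{\min}(s),u) \mid (u,s) \in E\}$, so some successor attains the minimum; let $\tau(u)$ pick such a successor. For every $u \in V_e$, consistency already gives $\au(\iota_{\min}(s),u) \sqsubseteq \iota_{\min}(u)$ for \emph{every} successor $s$. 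Hence, along any play $u_0, u_1, \ldots, u_n$ starting at $u_0 = v$ and consistent with $\tau$, we have the one-step invariant $\au(\iota_{\min}(u_{i+1}), u_i) \sqsubseteq \iota_{\min}(u_i)$ at every $i < n$.

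Next, I analyse what happens when player \emph{even} stops after $n$ steps and invokes the backward evaluation with $\mathbf b_0 = \_,\ldots,\_$ and $\mathbf b_{i+1} = \au(\mathbf b_i, u_{n-i})$. The central claim, proved by induction on $k$, is that $\mathbf b_k \sqsubseteq \iota_{\min}(u_{n-k+1})$ for all $1 \le k \le n+1$, where $u_{n+1}$ denotes any outgoing successor of $u_n$ (taking $u_{n+1} = \tau(u_n)$ when $u_n \in V_o$, and any successor when $u_n \in V_e$; one exists because $E$ is total). The base case $k=1$ uses monotonicity of $\au$ in its first argument, since $\_,\ldots,\_$ is $\sqsubseteq$-minimal, combined with the invariant at $i=n$ applied to this chosen extension; the inductive step applies monotonicity of $\au$ to the induction hypothesis and then the invariant at $i=n-k$.

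Instantiating $k=n+1$ yields $\mathbf b_{n+1} \sqsubseteq \iota_{\min}(u_0) = \iota_{\min}(v) \neq \won$. Because $\won$ is the $\sqsupseteq$-maximum of $\mathbb W$, this forces $\mathbf b_{n+1} \neq \won$, so player \emph{even} has not won at her chosen stopping point. As $n$ was arbitrary and $\tau$ is memoryless, this shows that $\tau$ defeats every strategy of player \emph{even} in the backward antagonistic update game from $v$; Corollary~\ref{cor:correctness} then delivers the parity-game conclusion and identifies $\tau$ as the sought witness strategy for player \emph{odd}.

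The main obstacle is the base case, where we must bound $\mathbf b_1 = \au(\_,\ldots,\_, u_n)$ by $\iota_{\min}(u_n)$ even though the play supplies no vertex beyond $u_n$. Rather than literally extending the play, one argues directly: monotonicity of $\au$ gives $\au(\_,\ldots,\_, u_n) \sqsubseteq \au(\iota_{\min}(s), u_n)$ for every successor $s$ of $u_n$, and consistency (for the minimizing $s$ when $u_n \in V_o$, or for any $s$ when $u_n \in V_e$) gives $\au(\iota_{\min}(s), u_n) \sqsubseteq \iota_{\min}(u_n)$, closing the base case without any auxiliary construction.
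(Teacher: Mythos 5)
Your proof is correct and follows essentially the same route as the paper's: extract player \emph{odd}'s strategy from the minimizing successors of $\iota_{\min}$, note that consistency gives the one-step bound $\au(\iota_{\min}(u_{i+1}),u_i) \sqsubseteq \iota_{\min}(u_i)$ along any consistent play, and propagate $\mathbf b_k \sqsubseteq \iota_{\min}(\cdot)$ through the backward evaluation by monotonicity of $\au$, concluding $\mathbf b_{n+1} \sqsubseteq \iota_{\min}(v) \neq \won$. Your anchoring of the base case via a chosen successor $u_{n+1}$ plays exactly the role of the paper's one-step extension $\rho,v_0$ by player \emph{odd}'s strategy, so the two arguments coincide.
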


\begin{proof}
We recall that the construction of $\iota_{\min}$ by Lemma \ref{lem:construct} provides
\begin{itemize}
 \item $\iota_{\min}(v) \sqsubseteq \max_\sqsubseteq\{\au(\iota_{\min}(s),v) \mid (v,s) \in E\}$ for $v \in V_e$, and
 \item $\iota_{\min}(v) \sqsubseteq \min_\sqsubseteq\{\au(\iota_{\min}(s),v) \mid (v,s) \in E\}$ for $v \in V_o$.
\end{itemize}
The latter provides the existence of some particular successor $s$ of $v$ with $\iota_{\min}(v) \sqsubseteq \au(\iota_{\min}(s),v)$.
The witness strategy of player \emph{odd} is to always choose such a vertex.

Let $\rho = v_n,v_{n-1},v_{n-2},\ldots,v_1$ be a sequence obtained by any strategy of player \emph{even} from a starting vertex $v_n$ with $\iota_{\min}(v_n) \neq \won$, such that player \emph{even} chooses to evaluate the backward antagonistic update game after $\rho$, and $\rho,v_0$ an extension in line with the strategy of player odd.

We first observe that $\iota_{\min}(v_{i+1}) \sqsubseteq \au(\iota_{\min}(v_i),v_{i+1})$ holds for all $i<n$, either by the choice of the successor of $v_{i+1}$ of player \emph{odd} if $v_{i+1}\in V_o$, or by 
$\iota_{\min}(v_{i+1}) \sqsubseteq \max_\sqsubseteq\{\au(\iota_{\min}(s),v_{i+1}) \mid (v_{i+1},s) \in E\} \sqsubseteq \au(\iota_{\min}(v_i),v_{i+1})$ if $v_{i+1}\in V_e$.
With $\iota_{\min}(v_n) \neq \won$, this provides $\iota_{\min}(v_i) \neq \won$ for all $i \leq n$.

Let $\mathbf b_0 = \_,\ldots,\_$ be the minimal element of $\mathbb W$, and $\mathbf b_{i+1} = \au(\mathbf b_i,v_{i+1})$.
Then $\mathbf b_0 \sqsubseteq \iota_{\min}(v_0)$, and the monotonicity of $\au$ in the first element inductively provides $\mathbf b_i \sqsubseteq \iota_{\min}(v_i)$ for all $i \leq n$.
Thus $\mathbf b_n \neq \won$, and player \emph{even} loses the update game.
\end{proof}

\section{Complexity}
We use natural representation for the set of colours as integers written in binary, encoding the $\_$ as $0$.
The first observation is that the number of individual lift operations is, for each vertex, limited to $|\mathbb W|$.

\begin{lemma}
\label{lem:number.of.operations}
For each vertex the number of lift operations is restricted to  $|\mathbb W|$.
The overall number of lift operations is restricted to $|V|\cdot |\mathbb W|$.
The number of lift operations an edge (or: source or target vertex of an edge, respectively) is involved in is restricted to   $|\mathbb W|$.
Summing up over all edges and over the number of lift operations their target \emph{or} source vertex is involved in amounts to $O(|E| \cdot |\mathbb W|)$.
\end{lemma}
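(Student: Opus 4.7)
The plan is to first bound the number of lift operations per vertex, and then aggregate to obtain the edge-based bound. The key observation is strict monotonicity: whenever $\iota$ is lifted at $v$, the definition ``liftable'' requires $\iota(v) \sqsubset \mathbf b_v$, and the resulting $\iota' = \lift(\iota, v)$ agrees with $\iota$ outside $v$ while replacing $\iota(v)$ by a strictly $\sqsubseteq$\nobreakdash-larger witness. Combining this with Lemma~\ref{lem:construct}, which guarantees that every intermediate progress measure remains pointwise $\sqsubseteq \iota_{\min}$, the values that $\iota(v)$ passes through during any sequence of lifts form a strictly $\sqsubseteq$\nobreakdash-increasing chain in the finite set $\mathbb{W} \cup \{\won\}$. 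Hence the number of lifts occurring at any fixed vertex is bounded by $|\mathbb{W}|$, which gives the first claim.

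Summing this per-vertex bound over all $v \in V$ immediately yields the overall bound of $|V| \cdot |\mathbb{W}|$ lift operations, settling the second claim.

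For the edge-based statement, I would interpret ``a lift operation in which edge $(u,v)$ is involved'' as a lift that happens either at $u$ or at $v$: at one endpoint the lift modifies $\iota$, and the other endpoint participates in the recomputation via $\au(\iota(s), \cdot)$ running over the neighbourhood of the lifted vertex. Since each individual vertex is lifted at most $|\mathbb{W}|$ times by the first claim, an edge $(u,v)$ is involved in at most $|\mathbb{W}|$ lifts when its source is the lifted vertex, and at most $|\mathbb{W}|$ lifts when its target is. Summing over all $|E|$ edges and the (constant number of) endpoint roles gives the final $O(|E| \cdot |\mathbb{W}|)$ total.

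The argument is essentially finite-chain bookkeeping, so there is no genuine obstacle; the only delicate point is to make the strict increase per lift visible so that the finite poset $\mathbb{W}$ can be used as a potential, after which all four statements of the lemma follow by elementary counting.
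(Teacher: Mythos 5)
Your proposal is correct and follows exactly the argument the paper intends: the paper states this lemma without an explicit proof precisely because it reduces to the observation that each lift strictly increases $\iota(v)$ in the finite, totally ordered set $\mathbb{W}$, giving at most $|\mathbb{W}|$ lifts per vertex, after which the vertex and edge totals follow by summation. The only cosmetic remark is that the appeal to Lemma~\ref{lem:construct} is not needed for the counting; strict increase plus finiteness of $\mathbb{W}$ already suffices.
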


A simple implementation can track, for each vertex, the information which position in the \is\ is the next one that would need to be updated to trigger a lift along an edge, and, using a binary representation in line with $\succcurlyeq$, which bit in the representation of this position has to change to consider triggering an update. (Intuitively the most significant bit that separates the current value from the next value that would trigger an update.)

Obviously, the most expensive path to $\iota_{\min}$ is for each position to go through all values of $|\mathbb W|$ in this case.
But in this case, tracking the information mentioned in the previous section reduces the average cost of an update to $O(1)$.
The information that we store for this is, for each vertex, the current \is\ that represents its current value before and after executing the antagonistic update, and the next value that would lead to a lift operation on the antagonistic value.

For each incoming edge, the position and bit that need to be increased to trigger the next lift operation for this vertex are also stored.

\begin{example}
We look at a vertex $v$ with one outgoing edge to its successor vertex $s$.
We have $7$ different colours, $2$ through $8$. Vertex $v$ has colour $2$.

We use a representation that follows the $\succcurlyeq$ order and thus maps $0$ to $0$, $7$ to $1$, $5$ to $2$, $3$ to $3$, $2$ to $4$, $4$ to $5$, $6$ to $6$, $8$ to $7$.

Assume that $s$ has currently a witness $\mathbf b = b_2,b_1,b_0 = 6,0,2$ attached to it, represented as $\widetilde{\mathbf b}=\widetilde{b}_2,\widetilde{b}_1,\widetilde{b}_0 = 6,0,4$.

To obtain a witness for $v$, we calculate $\mathbf c = \au(\mathbf b, v) = 6,5,2$, which is represented as $\widetilde{\mathbf c}=\widetilde{c}_2,\widetilde{c}_1,\widetilde{c}_0 = 6,2,4$.
The next higher value $\mathbf a \sqsupset \mathbf b$ such that $\au(\mathbf a, v) \sqsupset \au(\mathbf b, v)$ is $\widetilde{\mathbf a}=\widetilde{a}_2,\widetilde{a}_1,\widetilde{a}_0 = 6,2,4$.

The lowest position $i$ with $\widetilde{a}_i > \widetilde{b}_i$ is for position $i=1$, and the difference occurs in the middle bit ($\widetilde{a}_1 = 2 = 010_2$ and $\widetilde{b}_1 = 0 = 000_2$).

For the edge from $v$ to $s$, we can store after the update that we only need to consider an update from $s$ if it increases at least the position $b_1$ of the witness for $s$. If $b_1$ is changed, we only have to consider the change if the update is at least to the value represented as 2 ($\widetilde{b}_1' \geq 2$), and thus $b_1' \succcurlyeq 5$.
For all smaller updates of the witness of $s$, no update of the witness of $v$ needs to be considered.
\end{example}

\begin{theorem}
\label{theo:cost}
For a parity game with $n$ vertices and $m$ edges, the algorithm can be implemented to run in $O(m \cdot |\mathbb W|)$ time and $O(n \cdot \log |\mathbb W| + m \log \log |\mathbb W|)$ space.
\end{theorem}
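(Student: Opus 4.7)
\medskip
\noindent\textbf{Proof proposal.} The plan is to combine Lemma~\ref{lem:number.of.operations}, which already bounds the total number of edge-triggered lift events by $O(m \cdot |\mathbb W|)$, with a data-structure design that processes each such event in amortised $O(1)$ time.

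For the representation I would store, at each vertex $v$, the current witness $\iota(v)$ using the $\succeq$-compatible binary encoding from the example preceding the theorem: a sequence of $k+1 = O(\log n)$ entries in $\cu$, each of $O(\log c)$ bits, hence $O(\log |\mathbb W|)$ bits per vertex. I would also cache the current aggregate $\mathbf b_v$ and a pointer to the successor realising it, at no asymptotic cost. Along each edge $(v,s)$ I would store only a ``next-jump'' pair: the position $i$ and the bit $b$ whose rise in $\iota(s)$ would be the first to push $\au(\iota(s),v)$ strictly upward in $\sqsubseteq$. Since there are $O(\log \log |\mathbb W|)$ entries and $O(\log \log |\mathbb W|)$ bits per entry, this occupies $O(\log \log |\mathbb W|)$ bits per edge, delivering the total space bound $O(n \log |\mathbb W| + m \log \log |\mathbb W|)$.

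For the time bound I would drive the computation with a worklist of lift-ready vertices, as in Lemma~\ref{lem:construct}. Whenever $\iota(s)$ is lifted, I scan the incoming edges of $s$ and consult each edge's next-jump pair in $O(1)$; edges whose threshold has not been crossed are skipped. When the threshold of an edge $(v,s)$ is crossed, I update $\mathbf b_v$ in $O(1)$ by comparing the newly changed entry of $\iota(s)$ against the cached value (for $v \in V_o$ I only need to react when $s$ is the currently minimising successor), and I refresh the edge's next-jump pair in $O(1)$ by inspecting which of the three update rules of Lemmas~\ref{lem:up.overflow}--\ref{lem:up.stale} applies at the altered position. Because each edge can experience at most $|\mathbb W|$ distinct threshold crossings across the whole execution, and by Lemma~\ref{lem:number.of.operations} the summed edge work is $O(m \cdot |\mathbb W|)$, the total running time is $O(m \cdot |\mathbb W|)$.

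The main obstacle is precisely this amortised $O(1)$ bound per edge event: a naive recomputation of $\au(\iota(s),v)$ scans the full witness and costs $O(\log |\mathbb W|)$, which would only yield $O(m \cdot |\mathbb W| \cdot \log |\mathbb W|)$. The threshold mechanism must therefore guarantee both that the entry of $\iota(s)$ whose change actually triggered the event is pinpointed in constant time, and that the new next-jump pair can be read off from whichever rule in Lemmas~\ref{lem:up.overflow}--\ref{lem:up.stale} was just applied, without rescanning the witness. Establishing this invariant uniformly for maximiser and minimiser vertices, and verifying that it survives cascading lifts along paths of length up to $n$, is where the delicate part of the bookkeeping has to be carried out.
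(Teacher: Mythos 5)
Your proposal matches the paper's own argument essentially exactly: the paper likewise stores each witness in $O(\log |\mathbb W|)$ bits per vertex, keeps for every edge the position and bit whose increase would next trigger a lift (a pair of indices costing $O(\log\log|\mathbb W|)$ bits), and charges the $O(1)$ amortised work per edge event against the $O(m\cdot|\mathbb W|)$ bound from Lemma~\ref{lem:number.of.operations}. (One cosmetic slip: a witness has $O(\log n)$ entries, not $O(\log\log|\mathbb W|)$ of them --- it is the \emph{index} of an entry, and of a bit within it, that fits in $O(\log\log|\mathbb W|)$ bits --- but your final space count is the same as the paper's.)
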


Note that the $\log \log |\mathbb W|$ information per edge is only required to allow for a discounted update cost of $O(1)$. It can be traded for a $\log |\mathbb W|$ increase in the running time.
This leaves the estimation of $|\mathbb W|$.

To improve the complexity especially in the relevant lower range of colours, we first look into reducing the size of $\mathbb W$, and then look into keeping the discounted update complexity low.
We make three observations that can be used to reduce the size of $\mathbb W$; they can be integrated in the overall proof, starting with the raw and basic update steps.

The first observation is that, if the highest colour is the odd colour $o_{\max}$, then we do not need to represent this colour:
if $\phi(v)= o_{\max}$ and $\mathbf b \neq \won$, then $\up(\mathbf b,v)$ contains only $\_$ and $o_{\max}$ entries.
Moreover, $\_$ and $o_{\max}$ entries behave in exactly the same way.
This is not surprising: $o_{\max}$ is the most powerful colour, and a state with colour $o_{\max}$ cannot occur on a winning cycle.

The second observation is that, if the lowest colour is the odd colour $o_{\min}$, then we can ignore it during all update steps without violating the correctness arguments.
(In fact, this colour cannot occur at all when using the update rules suggested
in Calude et al.~\cite{quasipolynomial}.)

Finally, we observe that, for the least relevant entry $b_0$ of an \is\ $\mathbf b$, it does not matter if this entry contains $\_$ or an odd value.
We can therefore simply not use odd values at this position.
(Using the third observation has no impact on the complexity of the problem, but
still approximately halves the size of $\mathbb W$, and is therefore useful in practice.)

We call the number of different colours, not counting the maximal and minimal
colour if they are odd, the number $r$ of \emph{relevant colours}.

\begin{lemma}
For a parity game with $r$ relevant colours and $e$ vertices with even colour, and thus with length $l=\lceil \log_2(e+1)\rceil$ of the \is es,  $|\mathbb W| \leq 1 + \sum\limits_{i=0}^l \Big(\begin{array}{c} l \\ i \end{array}\Big) \cdot \Big(\begin{array}{c}i + r - 1  \\ r-1 \end{array}\Big)$.
\end{lemma}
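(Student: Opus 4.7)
The plan is to count the elements of $\mathbb W$ directly. Each element is either the distinguished symbol $\won$ (contributing the ``$1+$'' in the claimed bound) or a tuple $\mathbf b = b_{l-1}, b_{l-2}, \ldots, b_0$ of length $l$ whose entries lie in $\cu$. I would therefore split the proof into (a) identifying the restrictions on the admissible entries, (b) observing that the non-$\_$ entries form a weakly monotone sequence, and (c) a routine combinatorial count.

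For (a), I would combine the three observations that were made right after Theorem~\ref{theo:cost}: the maximal colour, when odd, never appears in a non-$\won$ witness because any state carrying it cannot lie on a winning cycle; the minimal colour, when odd, is never written into a witness by the update rules; and odd values are never written into the least significant position $b_0$. Hence every non-$\_$ entry that appears in any reachable $\mathbf b \in \mathbb W$ belongs to a set of size at most $r$, the number of relevant colours. This is the place where one has to be slightly careful, since the argument inspects the $\ru$ and $\au$ rules directly; I expect this to be the only delicate step, though it is really just an enumeration of cases already covered in the three observations.

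For (b), I would reuse the monotonicity fact stated immediately after the definition of witness: if $i<j$ and $b_i, b_j \ne \_$, then the $j$-witness ends strictly before the $i$-witness starts, and outer domination forces $\phi(v_{p_{i,2^i}}) \le \phi(v_{p_{j,2^j}})$, i.e.\ $b_i \le b_j$. Thus the sequence of non-$\_$ entries of $\mathbf b$, read from index $0$ upward, is a weakly increasing sequence of colours.

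With (a) and (b) the count is standard. Partition the non-$\won$ witnesses by the number $i$ of non-$\_$ entries, where $0 \le i \le l$. The positions of those entries are an arbitrary $i$-element subset of the $l$ slots, giving $\binom{l}{i}$ choices. Given those positions, filling them while respecting (a) and (b) amounts to choosing a weakly increasing sequence of length $i$ from an alphabet of size $r$, equivalently an $i$-element multiset of an $r$-element set, of which there are $\binom{i+r-1}{r-1}$. Summing over $i$ and adding $1$ for $\won$ yields
\[
|\mathbb W| \;\le\; 1 + \sum_{i=0}^{l} \binom{l}{i}\binom{i+r-1}{r-1},
\]
as claimed. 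The combinatorial step and the monotonicity step are completely routine; the only genuine content is step (a), and even there the work has essentially been done by the three observations preceding the lemma.
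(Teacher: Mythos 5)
Your proof is correct and follows essentially the same route as the paper: add $1$ for $\won$, stratify the remaining witnesses by the number $i$ of non-$\_$ entries, choose their positions in $\binom{l}{i}$ ways, and count the monotone fillings by relevant colours as multisets, i.e.\ $\binom{i+r-1}{r-1}$ via stars and bars (the paper's white/black balls). The extra detail you supply in steps (a) and (b) is exactly the justification the paper leaves implicit in its definition of relevant colours and its earlier remark that witness entries are monotone.
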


\begin{proof}
The $1$ refers to the dedicated value $\won$. For the other \is es, the values can be obtained
by considering the number $i$ of integer entries.
For $i$ integer entries, there are $\Big(\begin{array}{c} l \\ i \end{array}\Big)$ different positions in the \is es that could hold these $i$ integer values.
Fixing these positions, there are $\Big(\begin{array}{c}i + r - 1 \\ r-1 \end{array}\Big)$ ways to assign non-increasing values from the range of relevant colours.
(E.g.\ these can be represented by a sequence of $i$ white balls and $r-1$ black balls.
The number of white balls prior to the first black ball is the number of positions assigned the highest relevant colour, the number of white balls between the first and second black ball is the number of positions assigned the next lower colour, etc.)
\end{proof}

This allows for two easy estimations of the size of $|\mathbb W|$: 
If the number $c$ of colours is small (especially if $c$ is constant), then we can use the coarse estimate $|\mathbb W| \in O\Big( e \cdot  \Big(\begin{array}{c}l + r - 1  \\ l \end{array}\Big) \Big)$.

In particular, we get the following complexity for a constant number of colours.

\begin{theorem}
\label{theo:simplified.cost}
A parity game with $r$ relevant colours, $n$ vertices, $m$ edges, and $e$ vertices with even colour can be solved in time $O\big(e\cdot m \cdot (\log (e) + r)^{r-1} / (r-1)!\big)$ and space $O\big(n \cdot \log (e) \cdot \log (r) + m \cdot \log( \log (e) \cdot \log (r))\big)$.
\end{theorem}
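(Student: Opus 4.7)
The plan is to combine Theorem \ref{theo:cost}, which bounds the running time by $O(m\cdot |\mathbb W|)$ and the space by $O(n\cdot \log|\mathbb W|+m\log\log|\mathbb W|)$, with the binomial bound on $|\mathbb W|$ from the preceding lemma. Since that lemma already gives
$$|\mathbb W| \le 1 + \sum_{i=0}^{l} \binom{l}{i}\binom{i+r-1}{r-1}, \qquad l = \lceil \log_2(e+1) \rceil,$$
the whole proof reduces to two short estimates of this sum, one tight (for the running time) and one focused on logarithms (for the space).

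For the running time I would exploit the fact that $\binom{i+r-1}{r-1}$ is monotone in $i$, replace every summand's second binomial by its maximum value $\binom{l+r-1}{r-1}$, and pull it out of the sum; what remains is $\sum_{i=0}^l \binom{l}{i} = 2^l \le 2(e+1) \in O(e)$ by the choice of $l$. Using the standard factorial bound $\binom{l+r-1}{r-1} \le (l+r-1)^{r-1}/(r-1)!$ together with $l \in O(\log e)$, this gives $|\mathbb W| \in O\bigl(e\cdot (\log(e)+r)^{r-1}/(r-1)!\bigr)$, and plugging this into Theorem \ref{theo:cost} delivers the claimed running time.

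For the space I would instead use the symmetric form $\binom{l+r-1}{r-1} = \binom{l+r-1}{l}$ and the crude bound $\binom{l+r-1}{r-1} \le (l+r-1)^{\min(l,\,r-1)}$, which gives $\log\binom{l+r-1}{r-1} \in O\bigl(\min(l,r)\cdot \log\max(l,r)\bigr) = O(\log(e)\log(r))$ for $r \ge 2$. Combined with $\log 2^l = l \in O(\log e)$, this yields $\log|\mathbb W| \in O(\log(e)\cdot \log(r))$ and hence $\log\log|\mathbb W| \in O(\log(\log(e)\log(r)))$; substituting into Theorem \ref{theo:cost} produces the space bound. The main subtlety, which is bookkeeping rather than mathematics, is that the two halves of the argument need different upper bounds on the same binomial coefficient: the factorial-style bound supplies the tight $(r-1)!$ denominator in the running time, while the symmetric bound is what keeps $\log|\mathbb W|$ below $\log(e)\log(r)$ rather than degrading to a factor linear in $r$. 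Degenerate cases $r\le 1$ or $e = 0$ are immediate and can be absorbed into the $O$-notation.
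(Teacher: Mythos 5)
Your proposal is correct and follows essentially the same route as the paper, which obtains the coarse estimate $|\mathbb W| \in O\big(e\cdot\binom{l+r-1}{l}\big)$ from the binomial-sum lemma and plugs it into Theorem~\ref{theo:cost} using $l=\lceil\log_2(e+1)\rceil\in O(\log e)$. You merely supply the details the paper leaves implicit (pulling out the largest second binomial to get the $2^l\le 2(e+1)$ factor, the bound $\binom{l+r-1}{r-1}\le (l+r-1)^{r-1}/(r-1)!$, and the symmetric form for the logarithmic space estimate), so there is nothing to add.
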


We use that the length $l = \lceil \log_2 (e+1) \rceil$ of the \is es is logarithmic in $e$.

This also provides us with a strong fixed parameter tractability result:
when we fix the number of colours to some constant $c$, we maintain a quasi bi-linear complexity in the number of edges and the number of vertices.
If we fix, e.g., a monotonously growing quasi constant function $\mathsf{qc}$ (like the inverse Ackermann function), then Theorem \ref{theo:simplified.cost} shows that, as soon $\mathsf{qc}(n) \geq c$, and thus almost everywhere and in particular in the limit, have $(l+r)^{r-1} / (r-1)! \leq  (\log_2 n)^{\mathsf{qc}(n)}$, or $(l+r)^{r-1} / (r-1)! \leq \mathsf{qc}(n)^{\log_2(\log_2(n))}$ if $\log_2(\mathsf{qc}(n) \geq c)$.

\begin{corollary}
Parity games are fixed parameter tractable, using the number of colours as their
parameter, with complexity $O\big(m \cdot n \cdot \mathsf{qc}(n)^{\log \log
n}\big)$ for an arbitrary quasi constant $\mathsf{qc}$, where $m$ is the number of edges and $n$ is the number of states.
\end{corollary}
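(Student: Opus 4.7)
The plan is to specialise Theorem~\ref{theo:simplified.cost} with the number of colours $c$ as the fixed FPT parameter, and then to show that the trailing factor $(l+r)^{r-1}/(r-1)!$ is dominated asymptotically by $\mathsf{qc}(n)^{\log \log n}$. First I would invoke Theorem~\ref{theo:simplified.cost}, noting that $r \leq c$ is bounded by a constant, $e \leq n$, and $l = \lceil \log_2(e+1)\rceil \leq \log_2 n + 1$. Absorbing the factor $e$ into the $n$, the running time becomes
\[
O\bigl(m \cdot n \cdot (\log_2 n + c)^{c-1}/(c-1)!\bigr),
\]
so that with $c$ fixed the trailing factor is $O((\log n)^{c-1})$, where the hidden constant depends only on $c$.

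Next I would establish the elementary inequality $(\log n)^{c-1} \leq \mathsf{qc}(n)^{\log \log n}$. Taking logarithms on both sides reduces this to $\log \mathsf{qc}(n) \geq c - 1$. Since $\mathsf{qc}$ grows monotonously to infinity, so does $\log \mathsf{qc}$, and hence there is a threshold $n_0 = n_0(c,\mathsf{qc})$ beyond which this holds. Combining with the previous step immediately yields the claimed bound $O(m \cdot n \cdot \mathsf{qc}(n)^{\log \log n})$ for all $n \geq n_0$, which is the desired form already advertised in the unnumbered paragraph preceding the corollary.

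Finally, and this is the step deserving the most care, I would dispose of the finitely many instances with $n < n_0$. Each such instance has size bounded by a function of $c$ and $\mathsf{qc}$ alone, so the parity game can be solved in time depending only on $c$ and $\mathsf{qc}$, and this residual cost is absorbed into the $O$-constant of the FPT bound, whose hidden constant is by definition allowed to depend on the parameter. The main obstacle is precisely this piece of book-keeping: one must verify that the threshold $n_0$ is governed solely by $c$ and the chosen $\mathsf{qc}$, and not by the particular game, so that the resulting bound is a genuine FPT bound that is uniform in $n$ and in the instance. Beyond this small bit of asymptotic hygiene, the corollary is essentially a restatement of Theorem~\ref{theo:simplified.cost} with the polylogarithmic factor tucked inside the quasi-constant.
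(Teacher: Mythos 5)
Your proof is correct and follows essentially the same route as the paper, which obtains the corollary directly from Theorem~\ref{theo:simplified.cost} by observing that once $\log_2 \mathsf{qc}(n) \geq c$ the trailing factor $(l+r)^{r-1}/(r-1)!$ is dominated by $\mathsf{qc}(n)^{\log_2 \log_2 n}$. Your explicit treatment of the finitely many instances below the threshold $n_0$ is just the book-keeping the paper leaves implicit with ``almost everywhere and in particular in the limit''.
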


For a ``high'' number of colours, we can improve the estimation:
if $r \geq l^2$, then the case $i = l$ dominates the overall cost, such that $|\mathbb W| \in O\Big(\big(\begin{array}{c}l + r - 1  \\ l \end{array}\big)\Big)$.

\begin{theorem}
For a parity game with $r$ relevant colours, $m$ edges, and $e$ vertices with even colour, and thus length $l = \lceil \log_2(e+1) \rceil$ of the \is es, and $h = \big\lceil 1 + \frac{r-1}{l}\big\rceil$, one can solve the parity game in time
$O(m \cdot h \cdot e^{1+c_{1.45}+\log_2(h)})$, and in time $O(m \cdot h \cdot e^{c_{1.45}+\log_2(h)})$ if $r > l^2$.
\end{theorem}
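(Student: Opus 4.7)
The plan is to combine Theorem \ref{theo:cost}, which gives a running time of $O(m \cdot |\mathbb W|)$, with the explicit sum bound on $|\mathbb W|$ established in the preceding lemma, namely
\[
|\mathbb W| \leq 1 + \sum_{i=0}^l \binom{l}{i}\binom{i+r-1}{r-1}.
\]
Both claimed time bounds will reduce to estimating the single coefficient $\binom{l+r-1}{l}$.

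For the general bound I would observe that $\binom{i+r-1}{r-1}$ is non-decreasing in $i$ for $r \geq 1$, so each summand is at most $\binom{l+r-1}{l}$; factoring this out and using $\sum_{i=0}^l \binom{l}{i} = 2^l \leq 2e$ (a direct consequence of $l = \lceil \log_2(e+1) \rceil$) yields $|\mathbb W| = O\big(e \cdot \binom{l+r-1}{l}\big)$. For the sharper bound assuming $r > l^2$, I would simply invoke the observation already made in the paragraph immediately preceding the theorem, which states that in this regime the summand at $i = l$ asymptotically dominates the entire sum, giving $|\mathbb W| = O\big(\binom{l+r-1}{l}\big)$.

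The central calculation is then a Stirling-style estimate of $\binom{l+r-1}{l}$. Starting from $\binom{l+r-1}{l} \leq (l+r-1)^l / l!$ and the standard $l! \geq (l/\mathsf e)^l$, I get
\[
\binom{l+r-1}{l} \;\leq\; \Big(\frac{\mathsf e\,(l+r-1)}{l}\Big)^l \;=\; \mathsf e^l \cdot \Big(1 + \tfrac{r-1}{l}\Big)^l \;\leq\; (\mathsf e h)^l,
\]
where the last step uses $h = \lceil 1 + (r-1)/l\rceil \geq 1 + (r-1)/l$. To match the form in the theorem, I would then switch bases using the sharp inequality $2^l \leq 2e$ (so $l \leq 1 + \log_2 e$) together with the identity $a^{\log_2 b} = b^{\log_2 a}$:
\[
(\mathsf e h)^l \;\leq\; (\mathsf e h)^{1 + \log_2 e} \;=\; (\mathsf e h)\cdot e^{\log_2(\mathsf e h)} \;=\; O\big(h \cdot e^{c_{1.45} + \log_2 h}\big),
\]
recalling $c_{1.45} = \log_2 \mathsf e$. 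Multiplying $|\mathbb W|$ by $m$ then gives $O(m \cdot h \cdot e^{c_{1.45} + \log_2 h})$ in the $r > l^2$ case, and $O(m \cdot h \cdot e^{1 + c_{1.45} + \log_2 h})$ in general once the extra $e$ factor from $2^l \leq 2e$ is included, exactly as claimed.

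The main obstacle is purely bookkeeping rather than conceptual: I have to keep the two uses of ``$e$'' (the number of even vertices and Euler's constant $\mathsf e$) cleanly separated, and crucially to use the sharp inequality $l \leq 1 + \log_2 e$ rather than the looser $l \leq \log_2(e+1) + 1$, as the latter would leak an additional factor of $h$ into the leading coefficient and spoil the claimed $O(h)$ prefactor in the $r > l^2$ case.
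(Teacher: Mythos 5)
Your proposal is correct and follows the same overall decomposition as the paper: multiply the $O(m\cdot|\mathbb W|)$ bound from Theorem~\ref{theo:cost} by the estimate $|\mathbb W| \in O\big(e\cdot\binom{l+r-1}{l}\big)$ in general (dropping the factor $e$ when $r>l^2$ via dominance of the $i=l$ summand), and then reduce everything to bounding the single coefficient $\binom{l+r-1}{l} \le \binom{hl}{l}$. The one place where you genuinely diverge is the central binomial estimate: the paper invokes the entropy inequality $\binom{hl}{l} \le 2^{hl\cdot H(1/h)}$ and then massages the exponent into $\log_2(h) + (h-1)\log_2(1+1/(h-1)) \le \log_2(h)+c_{1.45}$, whereas you use the elementary chain $\binom{l+r-1}{l} \le (l+r-1)^l/l! \le (\mathsf e\,(l+r-1)/l)^l \le (\mathsf e h)^l$ from $l! \ge (l/\mathsf e)^l$. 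The two routes land at essentially the same intermediate quantity --- the entropy bound evaluates to $\big(h\cdot(1+1/(h-1))^{h-1}\big)^l$, which is marginally sharper than your $(\mathsf e h)^l$ but identical after applying $(1+1/(h-1))^{h-1} < \mathsf e$ --- so your version is a more self-contained derivation of the same inequality, at no cost in the final asymptotics. Your closing base-switch $(\mathsf e h)^{1+\log_2 e} = \mathsf e h\cdot e^{c_{1.45}+\log_2 h}$ is the same manipulation the paper performs when it rewrites $(2e)^{c_{1.45}+\log_2(h)}$ as $O\big(h\cdot e^{c_{1.45}+\log_2(h)}\big)$, and your care about $2^l \le 2e$ (hence $l \le 1+\log_2 e$) is exactly the sharpness the paper also relies on.
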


We use the constant $c_{1.45} = \lim_{h\rightarrow \infty} \log_2(1+1/h) 
\cdot h = \log_2 \mathsf e < 1.45$, 
where $\mathsf e \approx 2.718$ is the Euler number; using that
$(1+1/h)^h < \mathsf e$ and thus $\log_2  (1+1/h) \cdot h < c_{1.45}$ holds for all $h \in \mathbb N$.

\begin{proof}
To estimate $\mathbb W$, we again start with analysing the size of $\Big(\begin{array}{c}l + r - 1  \\ l \end{array}\Big)$.

We note that $l + r - 1 \leq h\cdot l$, such that we can estimate this value by drawing $l$ out of $h\cdot l$.

The number of all ways to choose
$l=\lceil\log(e+1)\rceil$ out of $h\cdot l$ numbers can, by the
Wikipedia page on binomial coefficients and the inequality using the entropy in there (also can be found in \cite{robert1990ash}), be bounded by
\begin{eqnarray*}
    && 2^{(\log_2(e)+1) \cdot h \cdot
       ((1/h) \cdot \log_2(h) + ((h-1)/h) \cdot
       \log_2(h/(h-1)))} \\
    & = & 2^{(\log_2(e)+1) \cdot (\log_2(h)+\log_2(1+1/(h-1))\cdot (h-1))} \\
    & = & (2e)^{\log_2(h)+(\log_2(1+1/(h-1))) \cdot (h-1))} \\
    & \leq & (2e)^{c_{1.45}+\log_2(h)} \in O\big(h \cdot e^{c_{1.45}+\log_2(h)}\big).
\end{eqnarray*}

The estimation uses that $\log(1+1/(h-1)) \cdot (h-1) < c_{1.45}$ holds for all $h \in \mathbb N$.

Theorem \ref{theo:cost} now provides $O(m \cdot h \cdot e^{1+c_{1.45}+\log_2(h)})$ time bound.
If the number of colours is high ($r > l^2$), then we observe that
$|\mathbb W| \leq 1 + \sum_{i=0}^l \Big(\begin{array}{c} l  \\ i \end{array}\Big)\cdot \Big(\begin{array}{c}i + r - 1  \\ i \end{array}\Big) \in O\Big(\big(\begin{array}{c}l + r - 1  \\ l \end{array}\big)\Big)$ holds,
as the sum is dominated by $\Big(\begin{array}{c} l  \\ l \end{array}\Big)\cdot \Big(\begin{array}{c}l + r - 1  \\ l \end{array}\Big)$.
This allows for the second estimate $O(m \cdot h \cdot e^{c_{1.45}+\log_2(h)})$ of the running time when $r > l^2$ holds.
\end{proof}

This allows for identifying a class of parity games that can be solved in polynomial time.

\begin{corollary}
Parity games where the number $c$ of colours is logarithmically bounded by the number $e$ of vertices with even colour ($c \in O(\log e)$) can be solved in polynomial time.
\end{corollary}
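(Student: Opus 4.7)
The plan is to derive the corollary as a direct consequence of the time bound $O(m \cdot h \cdot e^{1+c_{1.45}+\log_2(h)})$ given by the previous theorem, by showing that the parameter $h$ collapses to a constant once $c \in O(\log e)$. Since the number $r$ of relevant colours satisfies $r \leq c$, the assumption $c \in O(\log e)$ gives a constant $k$ with $r \leq k \log_2(e)$ for all sufficiently large $e$.

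Next I would estimate $h = \lceil 1 + (r-1)/l \rceil$, using that $l = \lceil \log_2(e+1) \rceil \geq \log_2(e)$. Substituting $r \leq k \log_2(e)$ yields $(r-1)/l \leq k$, and hence $h \leq k+2$, i.e.\ $h$ is bounded by a constant depending only on the $O$-constant of the hypothesis $c \in O(\log e)$. Consequently $\log_2(h)$ is also a constant, and the exponent $1 + c_{1.45} + \log_2(h)$ in the bound $e^{1+c_{1.45}+\log_2(h)}$ is a fixed constant.

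Plugging this back into the time bound of the previous theorem gives a running time of $O(m \cdot e^{O(1)})$, which is polynomial in the size of the parity game (since $e \leq n$ and $m \leq n^2$). For the small cases where $e$ is too small for the asymptotic inequality to bite, the game has constantly many vertices of even colour, and any of the earlier complexity bounds (e.g.\ Theorem~\ref{theo:simplified.cost}) already gives a polynomial, in fact trivial, running time. Combining the two cases yields the corollary.

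I do not anticipate a genuine obstacle: the entire argument is a substitution of the hypothesis into the previously established bound. The only mild care needed is in choosing the right inequality between $r$, $l$, and $e$ so that $h$ is cleanly seen to be constant; this is essentially the observation that $\log_2(e)$ and $l$ differ only by an additive constant, so dividing $r = O(\log e)$ by $l$ gives $O(1)$. Once $h = O(1)$ is in hand, the exponent of $e$ is fixed and polynomiality is immediate.
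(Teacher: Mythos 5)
Your proposal is correct and matches the paper's intended derivation: the paper states this corollary without proof as an immediate consequence of the preceding theorem, and the step you supply — that $c \in O(\log e)$ forces $r/l$ and hence $h$ to be bounded by a constant, so the exponent $1+c_{1.45}+\log_2(h)$ is constant and the bound $O(m \cdot h \cdot e^{1+c_{1.45}+\log_2(h)})$ becomes polynomial — is exactly the intended argument.
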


\section{Lower Bounds}
In this section, we introduce a family of examples, on which the the basic update game from \cite{quasipolynomial} is slow.
(Recall that these original rules restrict the use of Lemma \ref{lem:up.overflow} to even colours.
Adjusting the example is not hard, but effectively disallows to make effective use of $b_0$.)

The example is a single player game, which is drawn best
as a ring. In this example, the losing player, player \emph{odd}, can draw out his loss. The vertices
of the game have name and colour $1,\ldots,2n$.
They are all owned by player \emph{odd}.
There is always an edge to the next vertex (in the modulo ring).
Additionally, there is an edge back to $1$ from all vertices with even name (and colour).

Obviously, all runs are winning for player \emph{even}. 
We show how player \emph{odd} can, when starting in vertex $1$, produce a play, such that forward updates produce all \is es that use only $\_$ and even numbers.

We first observe that every value $2i-1$ is overwritten after the next move in a play by $2i$ in a witness $\mathbf b$.

The strategy of player \emph{odd} to create a long path is simple.
We consider three cases.

If, in the current witness $\mathbf b = b_k,\ldots,b_0$, we have $b_0=\_$ 
and the token is at a position $2i$,
then moving to $1$, and thus next to $2$, results in the next larger witness without odd entries than $\bfb$.

If  $b_0\neq \_$, then we have that $b_0 = 2i$, and $\bfb$ has no smaller entries than $2i$.
If all of these entries are consecutively on the right of $\bfb$, then we obtain the next larger witness without odd entries than $\bfb$ by going through $2i+1$ to $2i+2$. Player odd therefore chooses to continue by moving the token to vertex $2i + 1$ in this case.

Otherwise, there is a rightmost $b_j = \_$, such that right of it are only entries $2i$ (for all $h<j$, $b_h = 2i$), and there is also a $2i$ value to the left (for some $h>j$, $b_h = 2i$).
Then the next larger witness without odd entries than $\bfb$ is obtained by replacing $b_j$ by $2$ and all entries to its right by $\_$. This can be obtained by going to vertex 1 and, subsequently, to vertex 2. Player odd therefore chooses to continue by moving the token to vertex $1$ in this case.

\begin{figure}
\begin{center}
\begin{tikzpicture}[node distance = 1.5cm]
\tikzstyle{vertex} = [minimum size=0.7cm, inner sep=-0.2cm,draw, circle]

\node [vertex] (1) {$1$};
\node [vertex,below right of=1] (2) {$2$};
\node [vertex,below left of=2] (3) {$3$};
\node [vertex,above left of=3] (4) {$4$};

\path[->] 
    (1) edge [bend right=45] (2)
    (2) edge (3)
    (3) edge (4)

    (4) edge (1)
    (2) edge [bend right=45] (1)
    ;
\end{tikzpicture}
\end{center}
\caption{The lower bound example for $n = 2$.}
\end{figure}
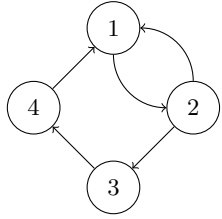

\section{Implementation}
We implemented our algorithm in {\sc C++} and tested its performance 
on Mac OS X with 1.7 GHz Intel Core i5 CPU and 4 GB of RAM.
We then compared it with the small progress measure algorithm \cite{Jurdzinski/00/ParityGames}, Zielonka's recursive algorithm \cite{Zielonka/98/Parity}, the classic strategy improvement algorithm \cite{Voge+Jurdzinski/00/simprove} as implemented in the {\sc PGSolver version 4.0} \cite{friedmann2014pgsolver,pgsolver17},
and the implementation \cite{totzke17} of an alternative recently developed succinct progress measure algorithm  from \cite{JL17}. 
We tested their performance, with timeout set to two minutes, on around 250 different parity games of various sizes generated using {\sc PGSolver}.
These examples include the following classes.
\begin{itemize}
\item Friedmann's trap examples~\cite{Friedmann/11/lower}, which show exponential lower bound for the classic strategy improvement algorithm;
\item random parity games of sizes, $s$, ranging from 100 to 10000 that were generated using {\sc PGSolver}'s command
\texttt{steadygame $s$ 1 6 1 6} (for each $s$ we generated ten instances);
\item recursive ladder construction \cite{friedmann2011recursive} generated using 
{\sc PGSolver}'s command \texttt{recursiveladder}.
\end{itemize}

{\sc PGSolver} implements several optimisation steps before the algorithm of choice is invoked.
These include SCC decomposition, detection of special cases, priority compression, and priority propagation
as described in \cite{friedmann2014pgsolver}.
To illustrate this, the small progress measures algorithm in {\sc PGSolver} was able to solve all 
Friedmann's trap examples in 0.01 seconds when using these optimisations.
However, without these optimisations, it failed to terminate within the set timeout of two minutes. 
As our aim was to compare different algorithms and not the heuristics or preprocessing steps involved, 
we invoked {\sc PGSolver} with options ``\texttt{-dgo -dsd -dlo -dsg}'' to switch off some of these optimisation steps.
We believe this gives a better and fairer picture of the relative performance of these algorithms.
Some of these optimisations are embedded in the algorithms themselves and cannot be switched off. 
For example, the small progress measure algorithm implemented in {\sc PGSolver} starts off with the computation of maximal values that may ever need to be considered \cite{friedmann2014pgsolver}. 
In future, we plan to include these optimisation preprocessing techniques into our tool as well.


\begin{table*}
\caption{Running times (in seconds) of the four algorithms tested: quasi-polynomial time algorithm presented in this paper (QPT), small progress measure (SPM), Zielonka's recursive algorithm (REC), the classic strategy improvement (CSI), and the implementation \cite{totzke17} of the quasi-polynomial time algorithm (JL'17) from \cite{JL17}.
Entry ``--'' means that the algorithm did not terminate within the set timeout of two minutes.
For the \texttt{steadygame} examples we state the minimum and the maximum measured execution time for the ten examples generated for each size.
}
\hspace*{-1em}
\begin{tabular}{ccclllll}
\toprule
Example Class & Nodes & Colours & \multicolumn{1}{c}{QPT} & \multicolumn{1}{c}{SPM} & \multicolumn{1}{c}{REC} & \multicolumn{1}{c}{CSI} & \multicolumn{1}{c}{JL'17 \cite{JL17,totzke17}} \\
\midrule
\multirow{2}{*}{\texttt{steadygame}} & \multirow{2}{*}{100} & \multirow{2}{*}{100} & min:\, 0.01 & min:\, 0.01 & min:\, 0.01 & min:\, 0.01 & min:\, 10.16 \\
 &  & & max: 0.02 & max: 0.02 & max: 0.01 & max: 0.01 & max: -- \\
\multirow{2}{*}{\texttt{steadygame}} & \multirow{2}{*}{200} & \multirow{2}{*}{200} & min:\, 0.01 & min:\, 0.01 & min:\, 0.01 & min:\, 0.01 & min:\, -- \\
 &  & & max: 0.09 & max: 0.06 & max: 0.01 & max: 0.03 & max: -- \\
\multirow{2}{*}{\texttt{steadygame}} & \multirow{2}{*}{1000} & \multirow{2}{*}{1000} & min:\, 0.09 & min:\, 1.55 & min:\, 0.01 & min:\, 0.14 & min:\, -- \\
 &  & & max: 1.51 & max: 1.67 & max: 0.04 & max: 0.23 & max: -- \\
\multirow{2}{*}{\texttt{steadygame}} & \multirow{2}{*}{5000} & \multirow{2}{*}{5000} & min:\, 1.51 & min:\, 41.49 & min:\, 0.23 & min:\, 1.56 & min:\, -- \\
 &  & & max: 102 & max: -- & max: 0.44 & max: 4.12 & max: -- \\
\multirow{2}{*}{\texttt{steadygame}} & \multirow{2}{*}{10000} & \multirow{2}{*}{10000} & min:\, 5.1 & min:\, -- & min:\, 0.68 & min:\, 3.07 & min:\, -- \\
 &  & & max: -- & max: -- & max: 1.89 & max: 8.25 & max: -- \\
\texttt{Friedmann's trap} & 77 & 66 & 0.01 & -- & 0.01 & 0.26 & -- \\
\texttt{Friedmann's trap} & 230 & 120 & 0.01 & -- & 0.01 & 22.72 & -- \\
\texttt{Friedmann's trap} & 377 & 156 & 0.01 & -- & 0.01 & -- & -- \\
\texttt{recursive ladder}     & 250 & 152 & 0.01 & -- & -- & 0.01 & 0.66 \\
\texttt{recursive ladder}     & 1000 & 752 & 0.02 & -- & -- & 0.01 & -- \\
\texttt{recursive ladder}     & 25000 & 15002 & 0.45 & -- & -- & 0.56 & -- \\
\bottomrule
\end{tabular}
\label{tab:results}
\end{table*}

The more interesting results of our tests are presented in Table \ref{tab:results}. 
As expected, our algorithm is outperformed by strategy improvement and recursive algorithm on randomly generated examples.
Our algorithm is very fast on Friedmann's trap examples, because player odd wins from all nodes
and a fixed point is reached very quickly using a small number of entries in the witnesses.
Finally, we tested the algorithms on the recursive ladder construction, which is a class of examples for which the recursive algorithm runs in exponential time.
As expected, the small progress measure and the recursive algorithm fail to terminate for 
examples as small as 250 nodes. Our algorithm as well as the classic strategy improvement
solved these instances very quickly. 
Interestingly, the worst performing algorithm is \cite{JL17}, which currently has the best theoretical upper bound on its running time.
The most likely reason for this is that their single step of the value iteration is a lot more complicated than ours. As a result, even if less such steps are required to reach a fixed point, the algorithm performs badly as each step is a lot slower.
In conclusion, our algorithm complements quite well the existing well-established algorithms for parity games and can be faster than any of them depending on the class of examples being considered.

The implementation of our algorithm along with all the examples that we used in this comparison are available at \url{https://cgi.csc.liv.ac.uk/~dominik/parity/}.

\section*{Acknowledgements}
We thank Ding Xiang Fei for pointing out an error in a previous
version of this manuscript.
Sanjay Jain was supported in part by NUS grant C252-000-087-001.
Further, Sanjay Jain and Frank Stephan were supported in
part by the Singapore Ministry of Education Academic Research Fund Tier 2
grant MOE2016-T2-1-019 / R146-000-234-112.
Sven Schewe and Dominik Wojtczak were supported in part by EPSRC grant EP/M027287/1.
Further, John Fearnley, Sven Schewe and Dominik Wojtczak were supported in part by EPSRC grant EP/P020909/1.

\bibliographystyle{abbrv}
\bibliography{bib}

\appendix
\onecolumn
\begin{figure}
\includegraphics[width=\textwidth]{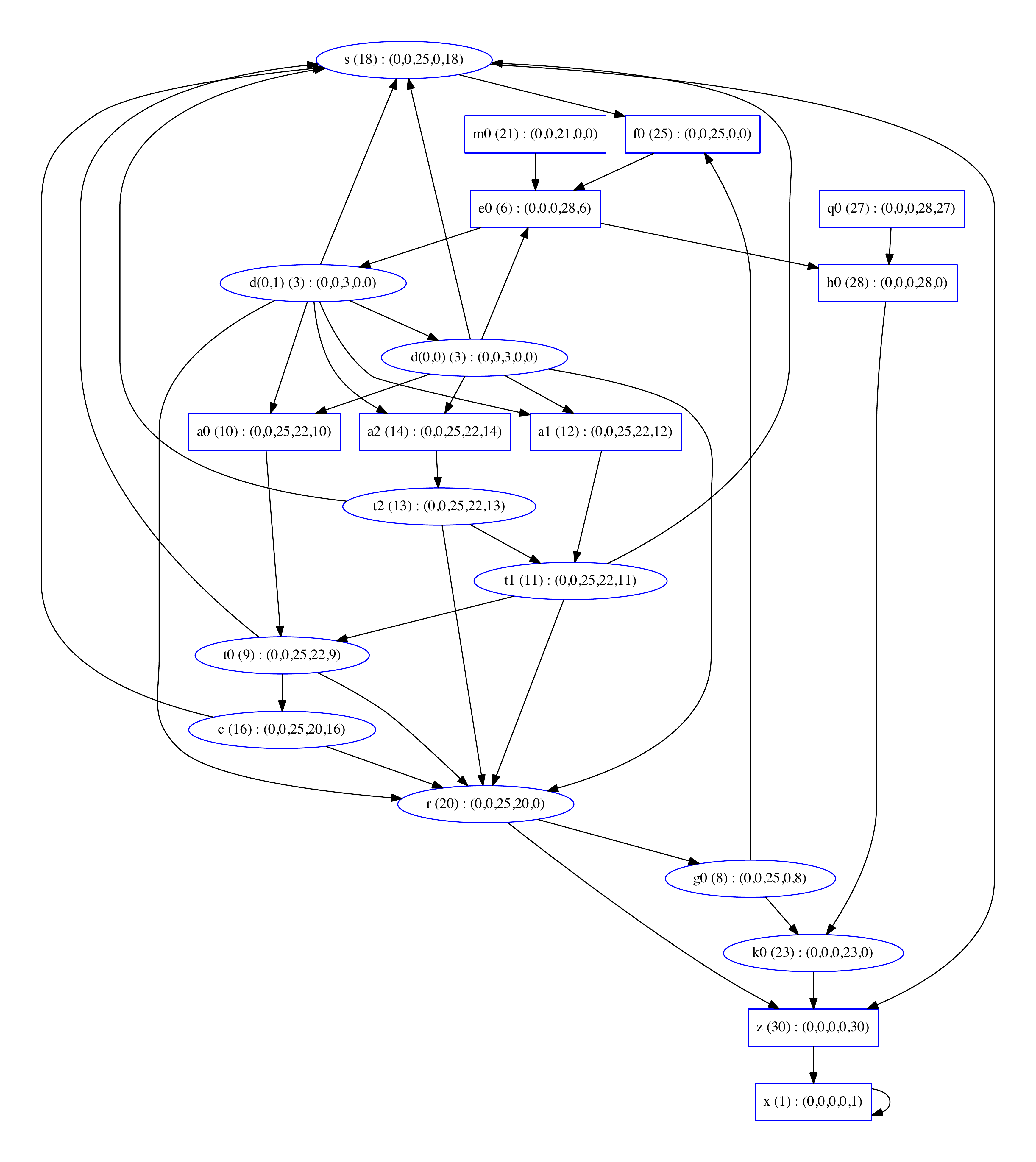}
\caption{The fixed-point reached when using the QPT algorithm to solve the Friedmann's trap example with 20 nodes.
Square nodes belong to player odd and circle nodes to player even. The label of a node consists of
its name, followed by its colour (in parentheses), and after a colon its witness for $\iota_{\min}$.}
\label{fig:trap}
\end{figure}

\end{document}